\documentclass[11pt,onecolumn,draftcls]{IEEEtran}

\pdfoutput=0

\setlength{\arraycolsep}{2pt} 

\usepackage{amsmath,amssymb,amsfonts,amsthm}
\usepackage{ucs}
\usepackage{color}
\usepackage[utf8x]{inputenc}
\usepackage[dvips]{graphics}
\usepackage{psfrag}
\usepackage{multirow}
\usepackage{rotating}
\usepackage{url}

\def\diag{\mathop{\rm diag}}

\DeclareSymbolFont{bbold}{U}{bbold}{m}{n}
\DeclareSymbolFontAlphabet{\mathbbold}{bbold}

\newtheorem{thm}{Theorem}
\newtheorem{lem}{Lemma}

\makeatletter
\newif\if@borderstar
\def\bordermatrix{\@ifnextchar*{%
\@borderstartrue\@bordermatrix@i}{\@borderstarfalse\@bordermatrix@i*}%
}
\def\@bordermatrix@i*{\@ifnextchar[{\@bordermatrix@ii}{\@bordermatrix@ii[()]}}
\def\@bordermatrix@ii[#1]#2{%
\begingroup
\m@th\@tempdima8.75\p@\setbox\z@\vbox{%
\def\cr{\crcr\noalign{\kern 2\p@\global\let\cr\endline }}%
\ialign {$##$\hfil\kern 2\p@\kern\@tempdima & \thinspace %
\hfil $##$\hfil && \quad\hfil $##$\hfil\crcr\omit\strut %
\hfil\crcr\noalign{\kern -\baselineskip}#2\crcr\omit %
\strut\cr}}%
\setbox\tw@\vbox{\unvcopy\z@\global\setbox\@ne\lastbox}%
\setbox\tw@\hbox{\unhbox\@ne\unskip\global\setbox\@ne\lastbox}%
\setbox\tw@\hbox{%
$\kern\wd\@ne\kern -\@tempdima\left\@firstoftwo#1%
\if@borderstar\kern2pt\else\kern -\wd\@ne\fi%
\global\setbox\@ne\vbox{\box\@ne\if@borderstar\else\kern 2\p@\fi}%
\vcenter{\if@borderstar\else\kern -\ht\@ne\fi%
\unvbox\z@\kern-\if@borderstar2\fi\baselineskip}%
\if@borderstar\kern-2\@tempdima\kern2\p@\else\,\fi\right\@secondoftwo#1 $%
}\null \;\vbox{\kern\ht\@ne\box\tw@}%
\endgroup
}
\makeatother

\title{Capacity of DNA Data Embedding Under Substitution
  Mutations}

\author{F{\'e}lix Balado\thanks{F.~Balado is with the School of
    Computer Science and Informatics, University College Dublin,
    Belfield, Dublin 4, Ireland. E-mail: felix@ucd.ie. Preliminary
    versions of this work were presented at the SPIE Media Forensics and
    Security XII conference (January 2010) and at the IEEE ICASSP
    conference (March 2010).}}

\begin{document}
\maketitle

\begin{abstract} 
  A number of methods have been proposed over the last decade for
  encoding information using deoxyribonucleic acid (DNA), giving rise
  to the emerging area of DNA data embedding. Since a DNA sequence is
  conceptually equivalent to a sequence of quaternary symbols (bases),
  DNA data embedding (diversely called DNA watermarking or DNA
  steganography) can be seen as a digital communications problem where
  channel errors are tantamount to mutations of DNA bases.  Depending
  on the use of coding or noncoding DNA hosts, which, respectively,
  denote DNA segments that can or cannot be translated into proteins,
  DNA data embedding is essentially a problem of communications with
  or without side information at the encoder. In this paper the
  Shannon capacity of DNA data embedding is obtained for the case in
  which DNA sequences are subject to substitution mutations modelled
  using the Kimura model from molecular evolution studies.  Inferences
  are also drawn with respect to the biological implications of some
  of the results presented.
  \end{abstract}

\IEEEpeerreviewmaketitle

\section{Introduction}
\label{sec:introduction}
\IEEEPARstart{T}{he} last ten years have witnessed the proposal of
numerous practical
methods~\cite{clelland99a,cox01:long_term,Shimanovsky02,wong03:organ_data,arita04:secret,modegi05:water_embed,yachie07:alignnment,heider07:dna_based_watermarking,heider09:dna_noncoding}
for encoding nongenetic information using DNA molecules as a medium
both \textit{in vitro} and \textit{in vivo}.  A conspicuous use of
these techniques recently took place when Craig Venter's group
produced the first artificial bacteria including ``watermarked''
information~\cite{daniel08:complete}. All of these information
encoding proposals hinge on the fact that DNA molecules ---which
encode genetic information in all living organisms, except for some
viruses--- are conceptually equivalent to sequences of quaternary
symbols. Therefore DNA data embedding is in essence an instance of
digital communications in which channel errors are tantamount to
mutations of DNA components. The two broad fields of application of
DNA data embedding techniques are: 1) the use of DNA strands as
self-replicating nano-memories able to store huge amounts of data in
an ultra-compact way; and 2) security and tracking applications for
genetic material afforded by embedding nongenetic information in DNA
(DNA watermarking, steganography, and fingerprinting).

The most basic information theoretical issue in DNA data embedding is
the establishment of the upper limit on the amount of information that
can be reliably embedded within DNA under a given level of mutations,
that is, its Shannon capacity~\cite{shannon48:math}. In this paper we
obtain the capacity of DNA data embedding under substitution mutations
---which randomly switch the value of bases in a DNA sequence---
modelled through a symmetric memoryless channel which was firstly used
to study molecular evolution by Kimura~\cite{kimura80}.  The capacity
problem can be straightforwardly tackled when no side information is
used by the encoder. The side-informed scenario requires more
attention for reasons that will become clear later, and thus occupies
us for the best part of this paper. Some biological implications at
large of these information theoretical results are also discussed. In
particular, the non side-informed scenario happens to be closely
connected to previous studies by May, Battail, and other authors that
have tried to apply information theoretical concepts to molecular
biology.

\begin{table*}[t]
  \centering
  {\scriptsize
    \setlength{\tabcolsep}{1pt}
    
    \begin{tabular}{c | c c c c c c c c c c c c c c c c c c c c c}
      $x'$&Ala&Arg&Asn&Asp&Cys&Gln&Glu&Gly&His&Ile&Leu&Lys&Met&Phe&Pro&Ser&Thr&Trp&Tyr&Val&\textit{Stp}\\
      \hline
      \multirow{6}{5mm}{\hspace{0.6cm}
        $\mathcal{S}_{x'}$}
          &GCA&AGA&AAC&GAC&TGC&CAA&GAA&GGA&CAC&ATA&CTA&AAA&\underbar{ATG}&TTC&CCA&AGC&ACA&TGG&TAC&GTA&TAA\\
          &GCC&AGG&AAT&GAT&TGT&CAG&GAG&GGC&CAT&ATC&CTC&AAG&   &TTT&CCC&AGT&ACC&   &TAT&GTC&TAG\\
          &GCT&CGA&   &   &   &   &   &GGT&   &ATT&CTT&   &   &   &CCT&TCA&ACT&   &   &GTT&TGA\\
          &GCG&CGC&   &   &   &   &   &GGG&   &   &\underbar{CTG}&   &   &   &CCG&TCC&ACG&   &   &GTG&   \\
          &   &CGT&   &   &   &   &   &   &   &   &TTA&   &   &   &   &TCT&   &   &   &   &   \\
          &   &CGG&   &   &   &   &   &   &   &   &\underbar{TTG}&   &   &   &   &TCG&   &   &   &   &   \\
          \hline
        $|\mathcal{S}_{x'}|$ 
          &4  &6  &2  &2  &2  &2  &2  &4  &2  &3  &6  &2  &1  &2  &4  &6  &4  &1  &2  &4  &3  \\
    \end{tabular}
}
\caption{Equivalences between amino acids and codons (genetic
  code). Start codons, which double as regular codons, are underlined.}
  \label{tab:amino}
\end{table*}

\section{Preliminary Concepts and Assumptions}
\label{sec:preliminary-concepts} 
Chemically, DNA is formed by two backbone strands helicoidally twisted
around each other, and mutually attached by means of two \textit{base}
sequences. The four possible bases are the molecules adenine,
cytosine, thymine, and guanine, abbreviated A, C, T and G,
respectively. Only the pairings A-T and C-G can exist between the two
strands, which is why each of the two base sequences is completely
determined by the other, and also why the length of a DNA molecule is
measured in base pairs (bp).  According to this brief description, the
interpretation of DNA as a one-dimensional discrete digital signal is
straightforward: any of the two strands constitutes a digital sequence
formed by symbols from a quaternary alphabet.

As regards the biological meaning of DNA, for the purposes of our
analysis it suffices to know that \textit{codons} ---the minimal
biological ``codewords''--- are formed by triplets of consecutive
bases in a base sequence. Given any three consecutive bases there is
no ambiguity in the codon they stand for, since there is only one
direction in which a base sequence can be read. In molecular biology
this is called the 5'--3' direction, in reference to certain chemical
feature points in a 
DNA backbone strand. The two strands in a DNA molecule are read in
opposite directions, and because of this and of their complementarity
they are termed antiparallel. Groups of consecutive codons in some
special regions of a DNA sequence can be translated into a series of
chemical compounds called \textit{amino acids} via transcription to
the intermediary ribonucleic acid (RNA) molecule.  RNA is similar to
DNA but single stranded and with uracil (abbreviated U) replacing
thymine.  Amino acids are sequentially assembled in the same order
imposed by the codon sequence. The result of this assembling process
are proteins, which are the basic compounds of the chemistry of
life. There are $4^3=64$ possible codons, since they are triplets of
$4$-ary symbols. Crucially, there are only $20$ possible amino acids,
mapped to the $64$ codons according to the so-called \textit{genetic
  code} in Table~\ref{tab:amino}, which will be explained in more
detail later. The genetic code effectively implements built-in
redundancy in terms of protecting protein expression.

The genome of an organism is the ensemble of all its DNA.  Segments of
a genome that can be translated into proteins through the process
described above are called \textit{coding} DNA (cDNA), whereas those
segments that never get translated are called \textit{noncoding} DNA
(ncDNA). A \textit{gene} is a cDNA segment, or group of segments,
which encodes one single protein, and which is flanked by certain
start and stop codons (see Table~\ref{tab:amino}) plus other
markers. 

Finally, for each base sequence
there are three different reading frames which determine three
different codon sequences. The correct reading frame is marked by the
position of a start codon. 

The main assumptions that we will make in our analysis are the following ones:
\begin{itemize}
\item \textit{ncDNA can be freely appended or overwritten}. Although
  ncDNA does not encode genes, this assumption does not always
  hold true. This is because certain ncDNA regions act as promoters
  for gene expression, or are transcribed into regulatory RNA (but not
  translated into proteins). However this working hypothesis is valid
  in suitably chosen ncDNA regions, as proved by several
  researchers~\cite{wong03:organ_data,yachie07:alignnment} employing
  live organisms.
\item \textit{cDNA can be freely modified as long as the genetic code
    is observed}. This is the classic standard assumption supporting
  the validity of the genetic code. In practice living organisms
  feature preferred codon statistics, which, if modified, might alter
  gene expression (for instance translation times, among other
  effects). Therefore we will also discuss codon statistics
  preservation in our analysis. Finally we must mention that we will
  only consider nonoverlapping genes (either on the same or on
  opposite strands). However overlapping genes are in any case rare
  occurrences, except in very compact genomes.
\end{itemize}

\textbf{Notation.} Calligraphic letters ($\mathcal{X}$) denote sets;
$|\mathcal{X}|$ is the cardinality of $\mathcal{X}$. Boldface letters
($\mathbf{x}$) denote row vectors, and $\mathbf{1}$ is an all-ones
vector. If a Roman letter is used both in uppercase ($X$) and
lowercase ($x$), the two forms denote a random variable and a
realisation of it, respectively. $p(X=x)$ is the probability mass
function (pmf) of $X$; we will simply write $p(x)$ when the variable
is clear from the context. $E[X]$ is the mathematical expectation of
$X$, and $H(X)$ its entropy. Also, $h(q)$ is the entropy of a
Bernoulli($q$) random variable.  $I(X;Y)$ is the mutual information
between $X$ and $Y$. Logarithms are base 2, unless explicitly
indicated otherwise. The Hamming distance between vectors $\mathbf{x}$
and $\mathbf{y}$ is denoted by $d_H(\mathbf{x},\mathbf{y})$.

A ncDNA sequence will be denoted by a vector
$\mathbf{x}^b=[x_1,x_2,\cdots,x_n]$, whose elements are consecutive
bases from a base sequence. That is,
$x_i\in\mathcal{X}\triangleq\{\mathrm{A,C,T,G}\}$, the $4$-ary set of
possible bases. A cDNA sequence will be denoted by a vector of vectors
$\mathbf{x}^c=[\mathbf{x}_1,\mathbf{x}_2,\cdots, \mathbf{x}_{n}]$
whose elements are consecutive codons from one of the two antiparallel
base sequences, assuming a suitable reading frame among the three
possible ones. Therefore, $\mathbf{x}_i\in \mathcal{X}^3$.  We denote
by $x'_i\triangleq\alpha(\mathbf{x}_i)\in\mathcal{X}'$ the amino acid
into which a codon $\mathbf{x}_i$ uniquely translates, which is
further discussed below. Also
$\mathbf{x}'=\alpha(\mathbf{x}^c)=[x_1',x_2',\cdots,x_n']$ denotes the
unique amino acid sequence established by~$\mathbf{x}^c$, usually
called the primary structure.  Using the standard three-letter
abbreviations of the amino acid names, we define the set
$\mathcal{X'}\triangleq\{$Ala, Arg, Asn, Asp, Cys, Gln, Glu, Gly, His,
Ile, Leu, Lys, Met, Phe, Pro, Ser, Thr, Trp, Tyr, Val,
\textit{Stp}\}. The subset of codons associated with amino acid
$x'\in\mathcal{X'}$, that is,
$\mathcal{S}_{x'}\triangleq\{\mathbf{x}\in\mathcal{X}^3|\alpha(\mathbf{x})=x'\}$,
is established by the genetic code shown in Table~\ref{tab:amino}.
The ensemble of stop codons, that is, the stop symbol \textit{Stp}, is
loosely classed as an ``amino acid'' for notational convenience,
although it does not actually map to any compound but rather indicates
the end of a gene. We call the number of codons $|\mathcal{S}_{x'}|$
mapping to amino acid~$x'$ the \textit{multiplicity} of $x'$. Due to
the uniqueness of the mapping from codons to amino acids, see that
$\mathcal{S}_{x'}\cap \mathcal{S}_{y'}=\emptyset$ for $x'\neq
y'\in\mathcal{X'}$, and that $\sum_{x'\in\mathcal{X}'}
|\mathcal{S}_{x'}|=|\mathcal{X}|^3=64$ since
$\cup_{x'\in\mathcal{X}'}\mathcal{S}_{x'}=\mathcal{X}^3$.  Finally, an
example of a cDNA sequence may be for instance
$\mathbf{x}^c=[[\mathrm{T,A,T}],[\mathrm{T,G,C}]]$, which would encode
the amino acid sequence
$\mathbf{x}'=\alpha(\mathbf{x}^c)=[\mathrm{Tyr},\mathrm{Cys}]$. The
corresponding base sequence would be
$\mathbf{x}^b=[\mathrm{T,A,T,T,G,C}]$.

\subsection{Mutation channel model} 
\label{sec:robustn-under-mutat}
As mentioned in the introduction, an information-carrying DNA molecule
undergoing mutations can be readily seen as a digital signal
undergoing a noisy communications channel, which we may term
``mutation channel'' in this context.  We will only consider herein
substitution mutations (also called point mutations), that is, those
that randomly switch letters from the DNA alphabet. We will assume
that mutations are mutually independent, which is a worst-case
scenario in terms of capacity. Therefore we are assuming that the
channel is memoryless and thus accepts a single-letter
characterisation; consequently, we will drop vector element subindices
whenever this is unambiguous and notationally convenient.

We will model the channel by means of the two-parameter Kimura model
of nucleotide substitution~\cite{kimura80}. This consists of a
$4\times 4$ transition probability matrix $\Pi= [p(Z=z|Y=y)]$, where
$z,y\in\mathcal{X}$, and which presents the following structure:

\vspace{-.5cm}
{\small
\begin{equation}
  \label{eq:pi_symm}
  \Pi\triangleq\!\!\!\bordermatrix[{[]}]{%
    &\textrm{A} & \textrm{C} & \textrm{T} & \textrm{G} \cr
    &1-q&\frac{\gamma}{3}q&\frac{\gamma}{3}q&(1-\frac{2\gamma}{3})q\cr
    &\frac{\gamma}{3}q&1-q&(1-\frac{2\gamma}{3})q&\frac{\gamma}{3}q\cr
    &\frac{\gamma}{3}q&(1-\frac{2\gamma}{3})q&1-q&\frac{\gamma}{3}q\cr
    &(1-\frac{2\gamma}{3})q&\frac{\gamma}{3}q&\frac{\gamma}{3}q&1-q}~\begin{array}{c}
    \mathrm{A}\\
    \mathrm{C}\\
    \mathrm{T}\\
    \mathrm{G}
  \end{array} 
\end{equation}}
From this definition, the probability of base substitution mutation,
or base substitution mutation rate, is 
\begin{equation}
  q=p(Z\neq y|Y=y)=\sum_{z\neq
    y}p(Z=z|Y=y),\label{eq:q}
\end{equation}
for any $y\in\mathcal{X}$, whereas it must hold that
$0\le\gamma\le 3/2$ so that row probabilities add up to one. The particular structure of~$\Pi$ aims at reflecting the fact that DNA bases belong to one of two
categories according their chemical structure: purines,
$\mathcal{R}\triangleq\{\mathrm{A},\mathrm{G}\}$, or pyrimidines,
$\mathcal{Y}\triangleq\{\mathrm{C},\mathrm{T}\}$.  There are two types
of base substitutions associated to these categories, which in
biological nomenclature are:
\begin{itemize}
\item Base \textit{transitions}: those that preserve the category which
  the base belongs to. In this case the model
  establishes that $p(Z=z|Y=y)=(1-2\gamma/3)q$ for $z\ne y$ when
  either both $z,y\in\mathcal{R}$ or both $z,y \in\mathcal{Y}$.
\item Base \textit{transversions}: those that switch the base category. In this
  case the model establishes 
  that $p(Z=z|Y=y)=(\gamma/3) q$ for $z\ne y$ when $z\in\mathcal{Y}$ and
  $z\in\mathcal{R}$, or vice versa.
\end{itemize}
The channel model~\eqref{eq:pi_symm} can incorporate any given
transition/transversion ratio $\varepsilon$ by setting
$\gamma=3/(2(\varepsilon+1))$. Estimates of~$\varepsilon$ given
in~\cite{purvis97:ti_tv} for the DNA of different organisms range
between $0.89$ and $18.67$, corresponding to $\gamma$ between $0.07$
and $0.79$. This range of $\varepsilon$ reflects the fact that base
transitions are generally much more likely than base transversions due
to the chemical similarity among compounds in the same category, that
is, $\varepsilon>1/2$ virtually always in every organism, and
therefore $\gamma<1$.  However many mutation estimation studies focus
only on the determination of~$q$ (see for
instance~\cite{kunkel04:dna_replication}), and then one may assume the
simplification $\gamma=1$ in the absence of further details. We will
make observations at several points for this particular case, which is
known as the Jukes-Cantor model in molecular evolution studies. In
this situation all off-diagonal entries of $\Pi$ are equal, that is,
$p(Z=z|Y=y)=q/3$ for all $z\neq y$.

Note that the mutation model that we have chosen implies a symmetric
channel, since all rows (columns) of $\Pi$ contain the same four
probabilities. Among the memoryless models used in molecular
evolution, the Kimura model is the one with higher number of
parameters which still yields a symmetric channel. As it is well
known, this is advantageous in capacity computations and will be
exploited whenever possible.  In the most general case a
time-reversible substitution mutations model may have up to 9
independent parameters, and yield a nonsymmetric channel. However
according to Li~\cite{li97:_molecular_evolution} mutation models with
many parameters are not necessarily accurate, due to the estimation
issues involved.

Under $m$ cascaded mutation stages we have a Markov chain $Y\to Z_{(1)}\to
Z_{(2)}\to\cdots\to Z_{(m)}$, and model~\eqref{eq:pi_symm} leads to the
overall transition probability matrix $\Pi^m$ between $Y$ and $Z_{(m)}$.
As $\Pi=\Pi^T$ we can write $\Pi^m=\mathrm{V}\, \mathrm{D}^m\,
\mathrm{V}^{T}$, with the eigenvalues of $\Pi$ arranged in a diagonal
matrix $\mathrm{D}\triangleq \diag(1,\,\lambda,\,\mu,\,\mu)$, where
\begin{eqnarray}
  \lambda&\triangleq& 1-\frac{4\gamma}{3} q\\
  \mu&\triangleq& 1-2\left(1-\frac{\gamma}{3}\right)q, 
\end{eqnarray}
and $\mathrm{V}$ a matrix whose columns are the normalised
eigenvectors of $\Pi$ associated to the corresponding eigenvalues in
$\mathrm{D}$, that is
 \begin{equation}
 \mathrm{V}=\frac{1}{2}\left[
   \begin{array}{r r r r}
     {\color{white}+}1&1&-\sqrt{2}&0\\
     1&-1&0&-\sqrt{2}\\
     1&-1&0&\sqrt{2}\\ 
     1&1&\sqrt{2}&0\\ 
  \end{array}\right].\label{eq:eigenv_matrix}
 \end{equation}
 From the diagonalisation of $\Pi^m$ it is straightforward to see that
 the elements of its diagonal all take the value $\frac{1}{4}\left(1+
   2 \mu^m+\lambda^m\right)$, the elements of its skew diagonal take
 the value $\frac{1}{4}\left(1- 2 \mu^m+\lambda^m\right)$, and the
 rest of its entries are
 $\frac{1}{4}\left(1-\lambda^m\right)$. Therefore any row (column) of
 this matrix contains the same probabilities, as $\Pi^m$ is also the
 transition matrix of a symmetric channel. From the diagonal elements
 one can see that the accumulated base substitution mutation rate
 after $m$ cascaded stages is given by
 \begin{equation}
   q^{(m)}=p(Z_{(m)}\neq y|Y=y)=1-\frac{1}{4}\left(1+ 2 \mu^m+\lambda^m\right).\label{eq:qm}
\end{equation}

When $q>0$, $\lim_{m\to\infty}
q^{(m)}|_{\gamma>0}=3/4$ but $\lim_{m\to\infty}
q^{(m)}|_{\gamma=0}=1/2$, because $|\mu|< 1$ for any $\gamma$ and
$|\lambda|<1$ when~$\gamma>0$, but $\lambda=1$ when $\gamma=0$. The
behaviour of this particular case is connected to the fact that we must have both
$q\in(0,1]$ and $\gamma\in(0,3/2]$ for the Markov chain to be
aperiodic and irreducible, and thus possess a limiting stationary
distribution. From the previous considerations, the limiting
distribution ---that is, the distribution of $Z_{(\infty)}$--- is uniform,
because $\lim_{m\to\infty} \Pi^m =\frac{1}{4}\mathbf{1}^T\mathbf{1}$.
When $\gamma=1$ and $q=3/4$ then $\Pi
=\frac{1}{4}\mathbf{1}^T\mathbf{1}$, and hence every $Z_{(m)}$ is
uniformly distributed as in the limiting case.

Lastly, under the base substitution mutation model that we are
considering, codons undergo a mutation channel modelled by the
$64\times 64$ transition probability matrix
\begin{equation}
  {\boldsymbol \Pi}=
  [p(\mathbf{Z}=\mathbf{z}|\mathbf{Y}=\mathbf{y})]=\Pi\otimes \Pi
  \otimes \Pi,\label{eq:kron}
\end{equation}
where $\otimes$ is the Kronecker product. This is because
$p(\mathbf{Z}=\mathbf{z}|\mathbf{Y}=\mathbf{y})=\prod_{i=1}^3
p(Z=z_i|Y=y_i)$ according to our memoryless channel assumption.
Trivially this channel is also symmetric. When $m$ mutation stages are
considered, $\Pi^m$ replaces $\Pi$ in~\eqref{eq:kron}, since
${\boldsymbol \Pi}^m=(\Pi\otimes \Pi \otimes \Pi)^m=\Pi^m\otimes \Pi^m
\otimes \Pi^m$~\cite{magnus99:matrix}.

\section{Capacity Analysis} 
\label{sec:capac-subs}

\subsection{Noncoding DNA} 
We will firstly consider this simple case, which will also establish a
basic upper bound to cDNA capacity. As per our discussion in
Section~\ref{sec:preliminary-concepts}, we are assuming that a
embedder can overwrite or append a host ncDNA strand~$\mathbf{x}^b$,
which amounts to freely choosing the input $\mathbf{y}^b$ to the
mutation channel. Therefore in this case the channel capacity is given
by $C_\mathrm{nc}\triangleq \max I(Z_{(m)};Y)$ bits/base, where the
maximisation is over all distributions of~$Y$. For the mutation model
considered, this capacity is that of the symmetric channel, in which
$H(Z_{(m)}|Y)$ is independent of the input and uniformly
distributed~$Y$ leads to uniformly distributed~$Z_{(m)}$. Hence
\begin{equation}
  \label{eq:cnc}
  C_\mathrm{nc} = \log{|\mathcal{X}|} - H(Z_{(m)}|Y) \textrm{ bits/base},
\end{equation}
where $H(Z_{(m)}|Y)=\sum_{z\in\mathcal{X}} p(Z_{(m)}=z|Y=y)\log p(Z_{(m)}=z|Y=y)$ for any
 $y\in\mathcal{X}$, that is, the entropy of any row of
$\Pi^m$. Therefore 
\begin{eqnarray}
  H(Z_{(m)}|Y)&=&-\frac{1}{4}\left(1+
  2\mu^m+\lambda^m\right)\log\left(\frac{1}{4}\left(1+
  2\mu^m+\lambda^m\right)\right)\nonumber\\ 
&&-\frac{1}{4}\left(1-
  2\mu^m+\lambda^m\right)\log\left(\frac{1}{4}\left(1-
  2\mu^m+\lambda^m\right)\right)\nonumber\\
&&-\frac{1}{2}\left(1-\lambda^m\right)\log \left(\frac{1}{4}\left(1-\lambda^m\right)\right).
 \label{eq:hzum}
\end{eqnarray}
As long as the Markov chain is aperiodic and irreducible then
$\lim_{m\to\infty} C_\mathrm{nc}=0$. The reason is that since the
limiting distribution is independent of $Y$, then
$\lim_{m\to\infty}H(Z_{(m)}|Y)=H(Z_{(\infty)})=\log |\mathcal{X}|$. It is
interesting to note that, under aperiodicity and irreducibility of the
Markov chain, this zero limiting capacity will also apply to models
more involved than~\eqref{eq:pi_symm}, such as those in which the
channel matrix is parametrised by up to~9 independent values. Lastly,
we also have that $C_\mathrm{nc}|_{\gamma=1,q=3/4}=0$, since in this
case $Z_{(m)}$ is always uniformly distributed.

As a function of $\gamma$ the ncDNA capacity is bounded as follows
\begin{equation}
  C_\mathrm{nc}|_{\gamma=1}\le C_\mathrm{nc}\le C_\mathrm{nc}|_{\gamma=0}.\label{eq:ineq_cnc}
\end{equation}
Although it can be shown  with some effort that these
inequalities always hold true, it is much simpler to prove them for
the range of interest $\gamma \le 1$ and $q\le 1/2$. The latter
condition implies that both $0\le\lambda\le 1$ and $0\le\mu \le
1$. For fixed $m$ and $q$, the maximum (respectively, minimum) of
$C_\mathrm{nc}$ over $\gamma$ corresponds to the minimum
(respectively, maximum) of the accumulated base mutation
rate~$q^{(m)}$. Differentiating~\eqref{eq:qm} we obtain $\partial
q^{(m)}/\partial \gamma=(m q/3) \left(\lambda^{m-1}-\mu^{m-1}\right)$.
Therefore $q^{(m)}$ is monotonically increasing when $\gamma\le 1$ (as
this corresponds to $\lambda\ge\mu$), and then its maximum in that
range occurs when $\gamma=1$ and its minimum when $\gamma=0$.

The upper bound can be written as
\begin{eqnarray}
  C_\mathrm{nc}|_{\gamma=0}=2-h\left(\frac{1}{2}+\frac{1}{2}(1-2q)^m\right).\label{eq:uppercnc}
\end{eqnarray}
Notice that $\lim_{m\to \infty} C_\mathrm{nc}|_{\gamma=0}=1$, that is,
the capacity limit is not zero when $\gamma=0$ because then the Markov
chain is reducible. This case cannot happen in practice since it would
imply that transversion mutations are impossible, but it illustrates
that the higher the transition/transversion ratio $\varepsilon$, the
higher the capacity.

Figures~\ref{fig:Cnc_1} and~\ref{fig:Cnc_2} show $C_\mathrm{nc}$ for
two different values of $q$ representative of extreme values of the base
substitution mutation range $q$ per replication found in different
living beings and different sections of
genomes~\cite{kunkel04:dna_replication}. We observe the validity of
the bounds~\eqref{eq:ineq_cnc} and the limiting behaviours
discussed. From these figures we can also empirically see that a rule-of-thumb capacity cut-off point is given by $m\sim 6/(5\gamma q)$.

\begin{figure}[t]
  \centering
  \psfrag{G1}[l][l]{\tiny \begin{turn}{30}$\gamma=1$\end{turn}}
  \psfrag{G0.1}[l][b]{\tiny \begin{turn}{30}$\gamma=10^{-1}$\end{turn}}
  \psfrag{G0.01}[l][b]{\tiny \begin{turn}{30}$\gamma=10^{-2}$\end{turn}}
  \psfrag{G0.001}[l][b]{\tiny \begin{turn}{30}$\gamma=10^{-3}$\end{turn}}
  \psfrag{G0}[l][l]{\tiny \begin{turn}{30}$\gamma=0$\end{turn}}
  \psfrag{m}[t][]{\small Cascaded mutation stages ($m$)}
  \psfrag{Cnc}[][l]{\small $C_\mathrm{nc}$, bits/base}
  \begin{minipage}{0.5\linewidth}
    \centering
    \includegraphics[width=7.5cm]{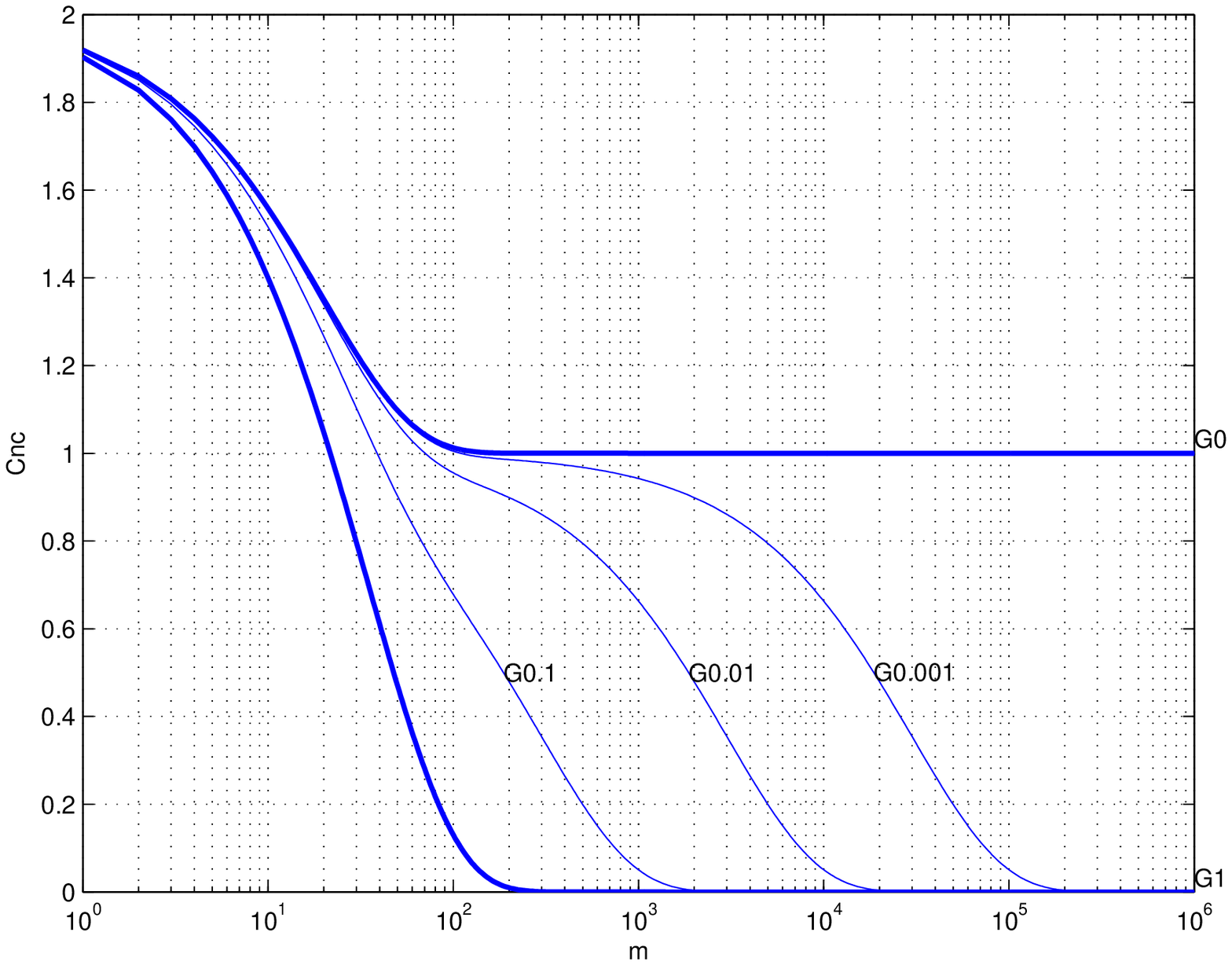}
    \caption{Embedding capacity in ncDNA ($q=10^{-2}$)}
    \label{fig:Cnc_1}
  \end{minipage}~\begin{minipage}{0.5\linewidth}
    \includegraphics[width=7.5cm]{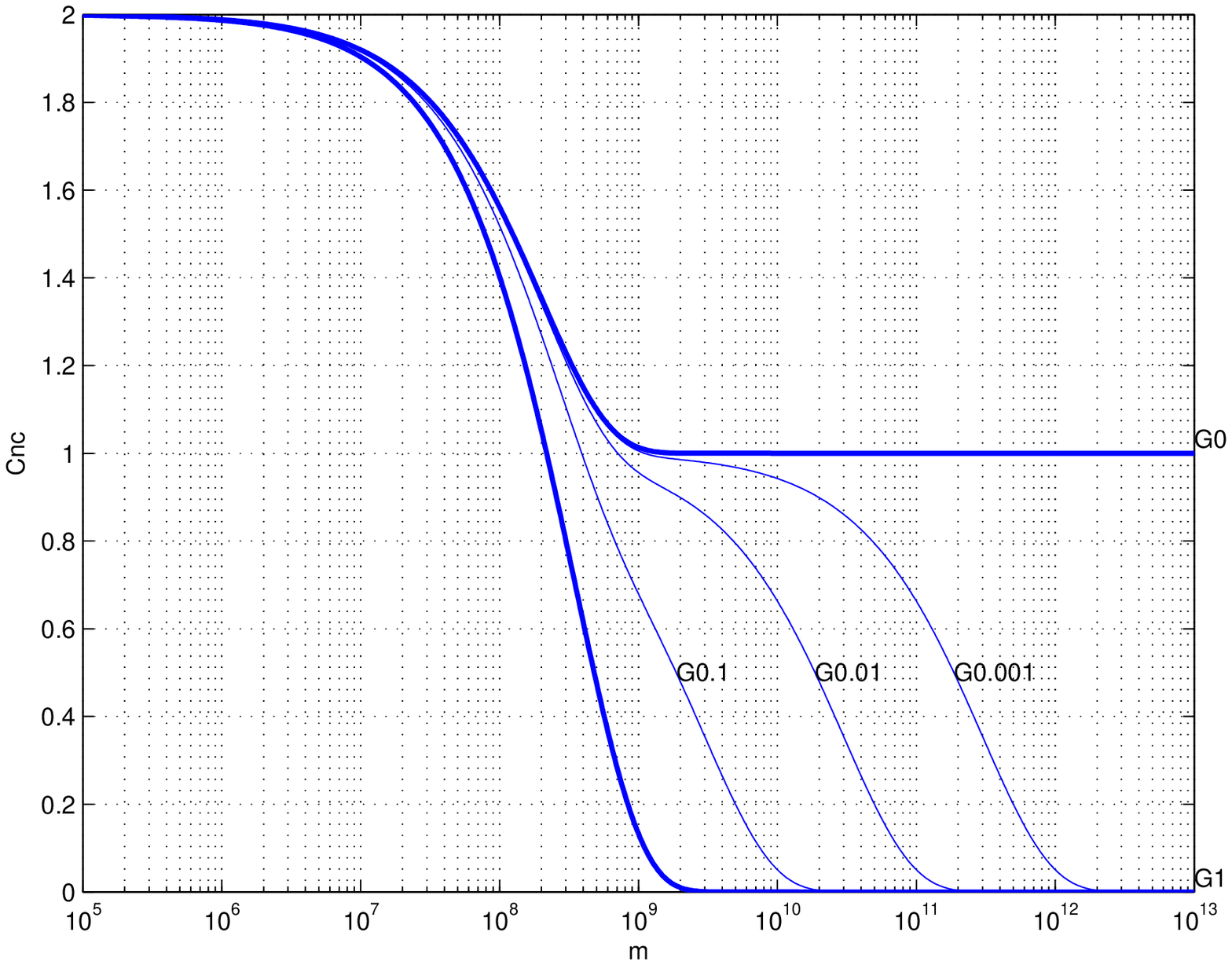}
    \caption{Embedding capacity in ncDNA ($q=10^{-9}$)}
    \label{fig:Cnc_2}
  \end{minipage} 
\end{figure}

\paragraph{Biological interpretations} 
We would like to point out that expression~\eqref{eq:cnc} also gives
the maximum mutual information between a DNA strand and its mutated
version in natural scenarios, independent of DNA data embedding
procedures. This has sometimes been termed the capacity of the genetic
channel in studies applying information theory to molecular
biology. Several authors have used the Jukes-Cantor model and
particular cases of the Kimura model ---apparently unaware of the
prior use of these models in molecular evolution studies--- in order
to estimate this capacity. The case $m=1$ was numerically evaluated by
May~\textit{et al.}~\cite{may03:_detection}, using values of $q$
estimated from different organisms and the Jukes-Cantor and Kimura
($\gamma=1/2$) models.  Some authors have also considered the
behaviour of capacity under cascaded mutation stages, that is,
$m>1$. Gutfraind~\cite{gutfraind06:error_tolerant} discussed the basic
effect of cascaded mutations on capacity (exponential decrease
with~$m$), although using a binary alphabet and the binary symmetric
channel. Both Battail~\cite{battail07:infotheory} and
May~\cite{may07:_bits_bases} computed capacity under cascaded mutation
stages using a quaternary alphabet and the Jukes-Cantor model. The
first author obtained his results analytically ---but using a
continuous-time approach rather than the discrete-time approach
followed here--- and the second one numerically. The results by
Battail are essentially consistent with the ones presented here
(similar capacity cut-off point), but the ones by May are not. The
capacity plots in~\cite{may07:_bits_bases} (taking the results for the
human genome) show a cut-off point of $m\approx 10^2$ for $q\approx
10^{-9}$, whereas $m\approx 10^{9}$ would have been expected according
to Figure~\ref{fig:Cnc_2}. Considering the extents of geological time,
where $m$ can easily reach $10^{9}$ and beyond, it seems clear that
the results in~\cite{may07:_bits_bases} underestimate capacity for
$m>1$.

In any case, none of the aforementioned approaches reflects the
capacity increase afforded by a mutation model allowing $\gamma<1$. It
is possible that the trend towards higher capacity observed as
$\gamma\to 0$ implies that evolution has favoured genetic building
blocks which feature an asymmetric behaviour under mutations (in our
case, pyrimidines versus purines instead of a hypothetically perfectly
symmetric set of four bases for which $\gamma=1$). If this assumption
is correct, this symmetry breaking must have occurred early in
evolutionary terms, since it is widely believed that the current
genetic machinery evolved from a former ``RNA
world''~\cite{Gilbert1986} in which life would only have been based on
the self-replicating and catalysing properties of RNA. In the RNA
world there would not have been translation to proteins, and therefore
no genetic code, and hence information was freely encoded using a
$4$-ary alphabet almost exactly like the one used in DNA. Note that
uracil, which replaces thymine in RNA, is also a pyrimidine, that is,
in the RNA world $\mathcal{Y}=\{\mathrm{U},\mathrm{C}\}$. With these
facts in mind, we may model the maximum transmissible information
under mutations in the RNA world by relying on~\eqref{eq:cnc}, and
thus see that the symmetry breaking conjecture above applies to the
evolution of RNA from predecessor genetic building blocks. We must
bear in mind that single-stranded molecules, such as RNA, are much
more mutation-prone than double-stranded ones such as DNA\footnote{For
  instance, RNA viruses such as HIV are known to exhibit base mutation
  rates of up to $10^{-2}$ per
  year~\cite{fu01:estimating}.}. Therefore smaller values of $m$ would
have sufficed for some type of symmetry breaking to be relevant in
terms of information transmission at early stages of life.

\subsection{Coding DNA}  
Unlike in the ncDNA case, embedding information in cDNA is a problem
of coding with side information at the encoder. Given a host sequence
$\mathbf{x}^c$, the encoder has to modify this host to produce an
information-carrying sequence $\mathbf{y}^c$ which must also encode
the same primary structure as $\mathbf{x}^c$ according to the genetic
code. This is equivalent to hiding data in a discrete host under an
embedding constraint. Nevertheless, apart from the trivial difference
of using a $4$-ary instead of typically a $2$-ary alphabet, several
issues set apart cDNA data embedding as a special problem. In order to
illustrate these issues consider momentarily a typical data hiding
scenario in which a discrete binary host, that is
$\mathbf{x}=[x_1,\cdots,x_n]$ with $x_i\in \mathcal{X}=\{0,1\}$, is
modified to embed a message~$m$ from a certain alphabet. The
watermarked signal $\mathbf{y}=e(\mathbf{x},m)$ must be close to
$\mathbf{x}$, where closeness is usually measured by means of the
Hamming distance $d_H(\mathbf{y},\mathbf{x})$. Pradhan \textit{et
  al.}~\cite{pradhan03:duality} and Barron \textit{et
  al.}~\cite{Barron03} have determined the achievable rate in this
scenario, assuming that the elements of $\mathbf{X}$ are uniformly
distributed, using the average distortion constraint
$\frac{1}{n}E[d_H(\mathbf{Y},\mathbf{X})]\le d$, and supposing that
$\mathbf{y}$ undergoes a memoryless binary symmetric channel with
crossover probability $q$. Their result is
 \begin{equation*}
   R^\mathrm{unif}=\mathrm{u.c.e.}\{h(d)-h(q)\} \mathrm{\; bits/host\; symbol},
 \end{equation*}
 where $\mathrm{u.c.e}\{\cdot\}$ is the upper concave envelope.
 Similarly, our initial goal for cDNA data embedding is obtaining the
 achievable rate for a fixed distribution of $X'=\alpha(\mathbf{X})$
 under the symmetric channel discussed in
 Section~\ref{sec:robustn-under-mutat}, in particular when
 $\mathbf{X}$ is uniformly distributed as in the analyses of Pradhan
 \textit{et al.}~\cite{pradhan03:duality} and Barron \textit{et
   al.}~\cite{Barron03}. Furthermore we will also obtain capacity,
 that is, the maximum achievable rate over all distributions of the
 host $X'$.

The first important difference in the cDNA data embedding scenario is
that average inequality constraints on the Hamming distance
---such as the ones used in~\cite{pradhan03:duality,Barron03}--- are
meaningless if one wants to carry through to $\mathbf{y}^c$ the full
biological functionality of~$\mathbf{x}^c$.  Instead, since it must
always hold that $\alpha(\mathbf{y}^c)=\alpha(\mathbf{x}^c)$, one must
establish the deterministic constraint
\begin{equation}\label{eq:constraint}
  d_H(\mathbf{y}',\mathbf{x}')=\sum_{i=1}^nd_H(y_i',x_i')=0.
\end{equation}
This requires that $d_H(y_i',x_i')=0$ for all $i=1,\cdots,n$. 

The second distinguishing feature of cDNA data embedding is due to the
variable support of the channel input variable. Whereas in discrete
data hiding with binary host one always has that $y_i\in\{0,1\}$
independently of $x_i$, in cDNA data embedding we have that
$\mathbf{y}_i\in\mathcal{S}_{\alpha(\mathbf{x}_i)}$ so that the
constraint~\eqref{eq:constraint} can always be satisfied. Therefore
the support of $\mathbf{y}_i$ is dependent on~$\mathbf{x}_i$, as codon
equivalence is not evenly spread over the ensemble of amino acids (see
Table~\ref{tab:amino}).

\subsubsection{Achievable Rate}\label{sec:achi-rate-analys}
Since side information at the encoder must be taken into account in
the cDNA case, then the achievable rate is given by Gel'fand and
Pinsker's formula~\cite{Gelfand} $R_\mathrm{c}^{X'}=\max
I(\mathbf{Z}_{(m)};\mathbf{U})-I(X';\mathbf{U})$ bits/codon, where the
maximisation is for nonnegative values of the functional on all
distributions $p(\mathbf{y},\mathbf{u}|x')$ under the constraint
$d_H(\alpha(\mathbf{y}),x')=0$, with~$\mathbf{U}$~an auxiliary random
variable that we will discuss next. Note that $R_\mathrm{c}^{X'}$
represents the maximum achievable rate when the host cDNA amino acid
sequence is distributed as $X'$.

Gel'fand and Pinsker showed in~\cite{Gelfand} that in the maximisation
problem above one may assume that the channel input is a deterministic
function of the side information $X'$ and the auxiliary variable
$\mathbf{U}$, that is, $\mathbf{Y}=e(X',\mathbf{U})$. Since the
support of $\mathbf{Y}|x'$ must be the set of codons
$\mathcal{S}_{x'}$ corresponding to amino acid $x'$ ---so that the
biological constraint can always be satisfied--- then the cardinality
of the support of $\mathbf{U}|x'$ has to coincide with the
multiplicity of $x'$, that is,~$|\mathcal{S}_{x'}|$. The support of
$\mathbf{U}|x'$ must actually be $\mathcal{S}_{x'}$, because
$\mathbf{U}$ must also act as a good source code for $X'$ in order to
minimise $I(X';\mathbf{U})$ under the genetic constraint, and if the
support of $\mathbf{U}|x'$ is otherwise then the constraint cannot always be
met. One can now establish $\mathbf{Y}|x'=\mathbf{U}|x'$ without loss
of generality, although any permutation of the elements of
$\mathcal{S}_{x'}$ is actually valid to define
$\mathbf{Y}|x'=e(x',\mathbf{U})$. Therefore in the following one may
consider that $\mathbf{Y}=\mathbf{U}$, that is, that $\mathbf{U}$ is
the mutation channel input. Noticing that $\mathcal{S}_{x'}\cap
\mathcal{S}_{y'}=\emptyset$ for $x'\neq y'\in\mathcal{X'}$, the
distribution of $\mathbf{U}$ can be put as
$p(\mathbf{u})=p(\mathbf{u}|x')p(x')$ when
$\mathbf{u}\in\mathcal{S}_{x'}$. This discussion on $\mathbf{U}$ also
implies that $H(X'|\mathbf{U})=0$, since given a codon $\mathbf{u}$
there is no uncertainty on the amino acid represented, and therefore
$I(X';\mathbf{U})= H(X')$

Since $\mathbf{Y}|(x',\mathbf{u})$ is deterministic, from the
considerations above we have that the achievable rate for a fixed
distribution of $X'$ is given by
\begin{equation}
  \label{eq:achievable_rate}
  R_\mathrm{c}^{X'}=\max_{p(\mathbf{u}|x')}
  I(\mathbf{Z}_{(m)};\mathbf{U})-H(X')\text{ bits/codon}.
\end{equation}
As $H(\mathbf{Z}_{(m)}|\mathbf{U})$ only depends on the transition
probabilities of the symmetric channel, and as trivially $H(X')$ only depends
on~$X'$,~\eqref{eq:achievable_rate} amounts to the constrained
maximisation of $H(\mathbf{Z}_{(m)})$. 

There are several cases in which~\eqref{eq:achievable_rate} can be
analytically determined, which are discussed next. First of all, since
$C_\mathrm{nc}|_{\gamma=1,q=3/4}=0$ then
$R_\mathrm{c}^{X'}|_{\gamma=1,q=3/4}=0$ for any~$X'$, because
$R_\mathrm{c}^{X'}\le 3C_\mathrm{nc}$. Therefore in this catastrophic
case the choice of $p(\mathbf{u}|x')$ is irrelevant. Furthermore it
can be shown that $p(\mathbf{u}|x')=1/|\mathcal{S}_{x'}|$, that is,
$\textbf{U}|x'$ uniformly distributed, is the maximising strategy in
two situations, which are discussed in the following lemmas.
\begin{lem}\label{lem:rq0}
  If $q=0$ then the achievable rate is
  \begin{eqnarray}
    \label{eq:rate_q0}
    R_\mathrm{c}^{X'}|_{q=0} &=&E\left[\log |\mathcal{S}_{X'}|\right]\;\mathrm{ bits/codon}.
  \end{eqnarray}
\end{lem}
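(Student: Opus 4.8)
The plan is to exploit the fact that setting $q=0$ turns the mutation channel into a noiseless channel. When $q=0$ every off-diagonal entry of $\Pi$ in \eqref{eq:pi_symm} vanishes and $1-q=1$, so $\Pi$ is the $4\times4$ identity and hence $\boldsymbol{\Pi}^m=\Pi^m\otimes\Pi^m\otimes\Pi^m$ is the $64\times64$ identity for every $m$. Thus $\mathbf{Z}_{(m)}=\mathbf{U}$ with probability one, recalling that (as argued in Section~\ref{sec:achi-rate-analys}) we may take $\mathbf{Y}=\mathbf{U}$ as the channel input. Substituting $I(\mathbf{Z}_{(m)};\mathbf{U})=I(\mathbf{U};\mathbf{U})=H(\mathbf{U})$ into the achievable-rate formula \eqref{eq:achievable_rate} gives
\begin{equation*}
  R_\mathrm{c}^{X'}|_{q=0}=\max_{p(\mathbf{u}|x')}H(\mathbf{U})-H(X').
\end{equation*}

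Next I would use the chain rule together with the observation, already established in the text, that $X'=\alpha(\mathbf{U})$ is a deterministic function of $\mathbf{U}$, i.e.\ $H(X'|\mathbf{U})=0$. Then $H(\mathbf{U})=H(\mathbf{U},X')=H(X')+H(\mathbf{U}|X')$, so the $-H(X')$ term cancels and
\begin{equation*}
  R_\mathrm{c}^{X'}|_{q=0}=\max_{p(\mathbf{u}|x')}H(\mathbf{U}|X')=\max_{p(\mathbf{u}|x')}\sum_{x'\in\mathcal{X}'}p(x')\,H(\mathbf{U}|X'=x').
\end{equation*}
Because $\mathbf{U}|x'$ is supported on $\mathcal{S}_{x'}$ and the sets $\{\mathcal{S}_{x'}\}_{x'\in\mathcal{X}'}$ are pairwise disjoint, the choice of $p(\mathbf{u}|x')$ for one amino acid imposes no constraint on the choice for another, so the maximisation decouples into $|\mathcal{X}'|$ independent problems. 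For each, $H(\mathbf{U}|X'=x')\le\log|\mathcal{S}_{x'}|$ with equality iff $\mathbf{U}|x'$ is uniform on $\mathcal{S}_{x'}$, i.e.\ $p(\mathbf{u}|x')=1/|\mathcal{S}_{x'}|$; this pmf is admissible since it is supported exactly on $\mathcal{S}_{x'}$, as required for the genetic constraint $d_H(\alpha(\mathbf{y}),x')=0$ to hold with certainty. Collecting terms yields $R_\mathrm{c}^{X'}|_{q=0}=\sum_{x'}p(x')\log|\mathcal{S}_{x'}|=E[\log|\mathcal{S}_{X'}|]$, which is moreover nonnegative, so the Gel'fand--Pinsker nonnegativity requirement on the functional is automatically met.

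There is no serious obstacle here; the argument is essentially a chain-rule identity plus a decoupled maximisation. The only points deserving care are: (i) verifying explicitly that $\boldsymbol{\Pi}^m$ collapses to the identity when $q=0$ (so $\mathbf{Z}_{(m)}=\mathbf{U}$ regardless of $\gamma$ and $m$), and (ii) justifying that the maximisation over the conditional pmfs $p(\mathbf{u}|x')$ separates across amino acids, which rests on the disjointness of the codon sets $\mathcal{S}_{x'}$ established in Section~\ref{sec:preliminary-concepts}.
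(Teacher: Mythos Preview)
Your proof is correct and follows essentially the same route as the paper: reduce to a noiseless channel so that $\mathbf{Z}_{(m)}=\mathbf{U}$, apply the chain rule together with $H(X'|\mathbf{U})=0$ to rewrite the objective as $\max_{p(\mathbf{u}|x')}H(\mathbf{U}|X')$, and then maximise each conditional term by choosing $\mathbf{U}|x'$ uniform on $\mathcal{S}_{x'}$. Your version is a bit more explicit about why $\boldsymbol{\Pi}^m$ is the identity and why the maximisation decouples across amino acids, but the argument is the same.
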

\begin{proof}
  Using the chain rule of the entropy we can write
  $H(\mathbf{U},X')=H(\mathbf{U})+H(\mathbf{U}|X')=H(X')+H(X'|\mathbf{U})$.  
  As $H(X'|\mathbf{U})=0$, and as $\mathbf{Z}_{(m)}=\mathbf{U}$ when
  $q=0$, then the achievable rate is given by
  $R_\mathrm{c}^{X'}|_{q=0}=\max_{p(\mathbf{u}|x')}H(\mathbf{U})-H(X')=\max_{p(\mathbf{u}|x')}H(\mathbf{U}|X')$. We
  just need to see now that
  $H(\mathbf{U}|X')=\sum_{x'\in\mathcal{X}'}p(x') H(\mathbf{U}|x')$
  is maximised when $H(\mathbf{U}|x')$ is maximum for all $x'$, which
  implies that $\mathbf{U}|x'$ be uniformly distributed in all
  cases. Then $H(\mathbf{U}|x')=\log |\mathcal{S}_{x'}|$
  and~\eqref{eq:rate_q0} follows.
\end{proof}

  \noindent\textbf{Remark.} Note that~\eqref{eq:rate_q0} is the embedding rate intuitively
  expected in the mutation-free case. For example, if~$\mathbf{X}$
  were uniformly distributed, which would yield
  $X'=\alpha(\mathbf{X})$ nonuniform with pmf
  $p(x')=|\mathcal{S}_{x'}|/|\mathcal{X}|^3$, then we would obviously
  compute the rate as
  $R_\mathrm{c}^{\alpha(\textrm{unif})}|_{q=0}=\sum_{x'}\frac{|\mathcal{S}_{x'}|}{|\mathcal{X}|^3}\log|\mathcal{S}_{x'}|=
  1.7819$ bits/codon, since $|\mathcal{S}_{x'}|$ choices are available
  to the embedder when the host amino acid is $x'$. The rate in the
  uniform case can actually be obtained in closed form for every $q$
  using the following result.
  
\begin{lem}\label{lem:unif}
  If $\mathbf{X}$ is uniformly distributed then the achievable rate is
  \begin{equation}
    \label{eq:rate_unif}
    R_\mathrm{c}^{\alpha(\mathrm{unif})}=\widetilde{C}_\mathrm{nc}-H(X')\;\mathrm{bits/codon},
  \end{equation}
  where $\widetilde{C}_\mathrm{nc}\triangleq \max I(\mathbf{Z}_{(m)};\mathbf{U})$
  and this maximisation is unconstrained on $p(\mathbf{u})$, that is,
  $\widetilde{C}_\mathrm{nc}$ is the capacity of the symmetric codon mutation
  channel. 

\end{lem}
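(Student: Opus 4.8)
The plan is to prove \eqref{eq:rate_unif} by sandwiching the constrained maximization in \eqref{eq:achievable_rate} between a trivial upper bound and a matching achievable value that becomes available specifically because $\mathbf{X}$ is uniform. Recall that the feasible channel inputs are exactly the distributions of the form $p(\mathbf{u})=p(\mathbf{u}|x')\,p(x')$ supported on $\bigcup_{x'}\mathcal{S}_{x'}=\mathcal{X}^3$, in which the block masses $p(x')$ are fixed by the host and only the within-block conditionals $p(\mathbf{u}|x')$ (each supported on $\mathcal{S}_{x'}$) may be chosen. Since this family is a subset of all distributions on $\mathcal{X}^3$, one immediately gets $\max_{p(\mathbf{u}|x')} I(\mathbf{Z}_{(m)};\mathbf{U}) \le \max_{p(\mathbf{u})} I(\mathbf{Z}_{(m)};\mathbf{U}) = \widetilde{C}_\mathrm{nc}$, hence $R_\mathrm{c}^{X'} \le \widetilde{C}_\mathrm{nc} - H(X')$ for \emph{any} host $X'$; the work is to show this bound is tight in the uniform case.

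I would then identify $\widetilde{C}_\mathrm{nc}$ explicitly. The codon channel $\boldsymbol\Pi^m = \Pi^m \otimes \Pi^m \otimes \Pi^m$ of \eqref{eq:kron} is symmetric: being the threefold Kronecker product of the doubly stochastic symmetric matrix $\Pi^m$, each of its rows (and columns) is a permutation of the same $64$ probabilities. Consequently $H(\mathbf{Z}_{(m)}|\mathbf{U})$ is input-independent and equals $3H(Z_{(m)}|Y)$, the unconstrained maximum of $H(\mathbf{Z}_{(m)})$ is $\log 64$ attained by the uniform codon input, and therefore $\widetilde{C}_\mathrm{nc} = \log 64 - 3H(Z_{(m)}|Y) = 3C_\mathrm{nc}$, with $p(\mathbf{u}) = 1/64$ optimal.

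Finally I would verify that this optimal uniform codon input is feasible precisely when the host is uniform, which closes the argument. If $\mathbf{X}$ is uniform then $X' = \alpha(\mathbf{X})$ has pmf $p(x') = |\mathcal{S}_{x'}|/|\mathcal{X}|^3 = |\mathcal{S}_{x'}|/64$; choosing $\mathbf{U}|x'$ uniform, i.e.\ $p(\mathbf{u}|x') = 1/|\mathcal{S}_{x'}|$ on $\mathcal{S}_{x'}$, yields $p(\mathbf{u}) = p(\mathbf{u}|x')\,p(x') = 1/64$ for every $\mathbf{u} \in \mathcal{X}^3$, since the sets $\mathcal{S}_{x'}$ partition $\mathcal{X}^3$. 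Thus the upper bound is met with equality, $\max_{p(\mathbf{u}|x')} I(\mathbf{Z}_{(m)};\mathbf{U}) = \widetilde{C}_\mathrm{nc}$, and \eqref{eq:rate_unif} follows; as a by-product this confirms the Remark that $\mathbf{U}|x'$ uniform is a maximizing strategy in the uniform-host case and, evaluated at $q=0$, reproduces the $1.7819$ bits/codon figure via \eqref{eq:rate_q0}.

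There is essentially no heavy computation here; the one point needing care is the bookkeeping of the feasible set. One must notice that fixing the host distribution fixes the block masses $p(x')$ while leaving the conditionals $p(\mathbf{u}|x')$ free, and that the uniform codon distribution lies in this set if and only if $p(x') \propto |\mathcal{S}_{x'}|$ --- precisely the uniform-host condition. For non-uniform hosts the same bound $\widetilde{C}_\mathrm{nc} - H(X')$ still holds but is generally slack, which is why \eqref{eq:rate_unif} is special to this case; the nonnegativity clipping built into the Gel'fand--Pinsker functional (relevant, e.g., at $\gamma=1,\,q=3/4$) is inherited without change.
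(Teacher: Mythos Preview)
Your argument is correct and is essentially the same as the paper's: the key step in both is noting that when $\mathbf{X}$ is uniform the choice $p(\mathbf{u}|x')=1/|\mathcal{S}_{x'}|$ produces the uniform codon distribution $p(\mathbf{u})=1/|\mathcal{X}|^3$, which attains the unconstrained capacity $\widetilde{C}_\mathrm{nc}$ of the symmetric codon channel. Your explicit sandwich (feasible set $\subseteq$ all codon distributions, hence $\le\widetilde{C}_\mathrm{nc}$, with equality via the feasible uniform input) just spells out what the paper leaves implicit, and your identification $\widetilde{C}_\mathrm{nc}=3C_\mathrm{nc}$ is exactly the observation the paper makes in the Remark following the lemma.
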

\begin{proof}
  Since $p(\mathbf{u})=p(\mathbf{u}|x')p(x')$ when
  $\mathbf{u}\in\mathcal{S}_{x'}$, with uniformly distributed
  $\mathbf{X}$ we have that
  $p(\mathbf{u})=p(\mathbf{u}|x')|\mathcal{S}_{x'}|/|\mathcal{X}|^3$
  when $\mathbf{u}\in\mathcal{S}_{x'}$. Therefore choosing
  $\mathbf{U}|x'$ to be uniformly distributed implies that
  $p(\mathbf{u})=1/|\mathcal{X}|^3$ for all $\mathbf{u}$.  Since
  ${\boldsymbol \Pi}^m$ is symmetric and a uniform input maximises
  mutual information over a symmetric channel, then
  $\widetilde{C}_\mathrm{nc}=\max I(\mathbf{Z}_{(m)};\mathbf{U})$ is
  achieved in~\eqref{eq:achievable_rate}.
\end{proof}

  \noindent\textbf{Remarks.} Since  $\widetilde{C}_\mathrm{nc}|_{q=0}=\log
  |\mathcal{X}|^3$, observe that the particular case in the previous
  remark can be written as well as
  $R_\mathrm{c}^{\alpha(\textrm{unif})}|_{q=0}=\log|\mathcal{X}|^3
  -H(X')$.  An interesting insight is also afforded by seeing that the
  three parallel symmetric channels undergone by the bases in a codon
  are mutually independent, and hence one can use the equality
  $\widetilde{C}_\mathrm{nc}=3\, C_\mathrm{nc}$
  in~\eqref{eq:rate_unif}. As $H(X')$ is the lower bound to the
  lossless source coding rate of $X'$, expression~\eqref{eq:rate_unif}
  tells a fact that is intuitively appealing but which is only exact
  when~$\mathbf{X}$ is uniform: the cDNA embedding rate is the same as
  three times the ncDNA embedding rate minus the rate needed to
  losslessly convey the primary structure of the host to the decoder.

  Unlike in the case considered above, the distribution of
  $\mathbf{X}$ in real cDNA sequences (that is, genes) is not uniform.
  To start with, there can only be a single \textit{Stp} codon in a
  sequence that encodes a protein (gene). As in many other channel
  capacity problems, it does not seem possible in general to
  analytically derive the optimum set of pmf's $p(\mathbf{u}|x')$ in
  order to compute the achievable rate $R_\mathrm{c}^{X'}$
  corresponding to a host distributed as~$X'$. To see why one can pose
  the analytical optimisation problem and see that it involves solving
  a nontrivial system of $|\mathcal{X}|^3+|\mathcal{X}'|$ nonlinear
  equations and unknowns.  However the numerical solution is
  straightforward by means of the Blahut-Arimoto
  algorithm~\cite{blahut72:_comput} adapted to the side-informed
  scenario. Such an algorithm has been described by Dupuis~\textit{et
    al.}~\cite{dupuis04:_blahut}. An example of the optimal
  distributions numerically obtained for a particular case is shown in
  Figure~\ref{fig:puxp}.

\begin{figure}[t]
  \centering
  \psfrag{Ala}[][]{\color{black}\begin{turn}{90}\tiny Ala\end{turn}}
  \psfrag{Arg}[][]{\color{black}\begin{turn}{90}\tiny Arg\end{turn}}
  \psfrag{Asn}[][]{\color{black}\begin{turn}{90}\tiny Asn\end{turn}}
  \psfrag{Asp}[][]{\color{black}\begin{turn}{90}\tiny Asp\end{turn}}
  \psfrag{Cys}[][]{\color{black}\begin{turn}{90}\tiny Cys\end{turn}}
  \psfrag{Gln}[][]{\color{black}\begin{turn}{90}\tiny Gln\end{turn}}
  \psfrag{Glu}[][]{\color{black}\begin{turn}{90}\tiny Glu\end{turn}}
  \psfrag{Gly}[][]{\color{black}\begin{turn}{90}\tiny Gly\end{turn}}
  \psfrag{His}[][]{\color{black}\begin{turn}{90}\tiny His\end{turn}}
  \psfrag{Ile}[][]{\color{black}\begin{turn}{90}\tiny Ile\end{turn}}
  \psfrag{Leu}[][]{\color{black}\begin{turn}{90}\tiny Leu\end{turn}}
  \psfrag{Lys}[][]{\color{black}\begin{turn}{90}\tiny Lys\end{turn}}
  \psfrag{Met}[][]{\color{black}\begin{turn}{90}\tiny Met\end{turn}}
  \psfrag{Phe}[][]{\color{black}\begin{turn}{90}\tiny Phe\end{turn}}
  \psfrag{Pro}[][]{\color{black}\begin{turn}{90}\tiny Pro\end{turn}}
  \psfrag{Ser}[][]{\color{black}\begin{turn}{90}\tiny Ser\end{turn}}
  \psfrag{Thr}[][]{\color{black}\begin{turn}{90}\tiny Thr\end{turn}}
  \psfrag{Trp}[][]{\color{black}\begin{turn}{90}\tiny Trp\end{turn}}
  \psfrag{Tyr}[][]{\color{black}\begin{turn}{90}\tiny Tyr\end{turn}}
  \psfrag{Val}[][]{\color{black}\begin{turn}{90}\tiny Val\end{turn}}
  \psfrag{Stp}[][]{\color{black}\begin{turn}{90}\tiny \textit{Stp}\end{turn}}
  \psfrag{puxp}[b][]{$p(\mathbf{u}|x')$}
  \psfrag{x}[][]{$x'$}
  \includegraphics[width=10cm, height=8cm]{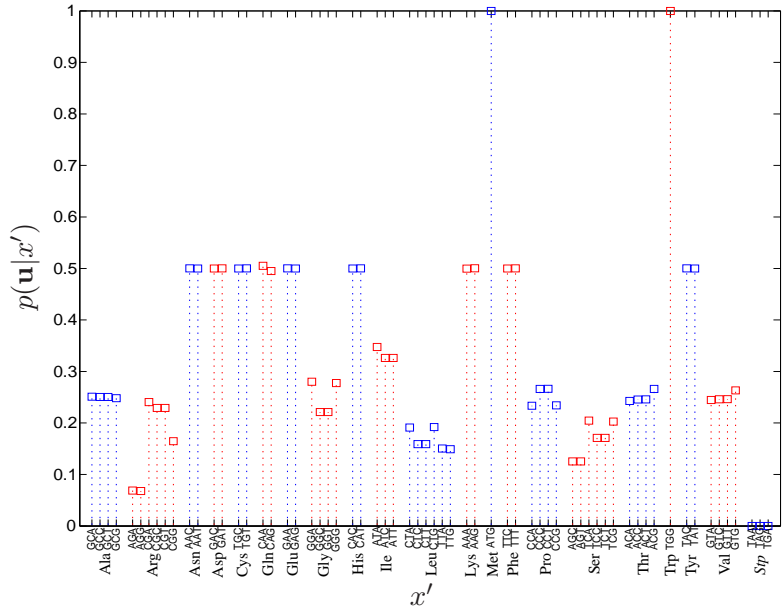}
  \caption{Example of maximising $p(\mathbf{u}|x')$ distributions
    numerically obtained using the Blahut-Arimoto algorithm and
    $p(x')$ corresponding to gene Ypt7 from yeast (GenBank accession
    number NC\_001145), employing $\gamma=0.1$, $q=10^{-2}$,
    $m=10$. Conditional pmf's are depicted in alternating red and blue
    colours to facilitate plot reading.}
  \label{fig:puxp}
\end{figure}

A last observation is that, in general,
$p(\mathbf{u}|x')=1/|\mathcal{S}_{x'}|$ turns out to yield a good
approximation to the exact numerical solution. Note that for
distributions of $X'$ different from the one in Lemma~\ref{lem:unif} one cannot
produce a uniform input $\mathbf{U}$ to the symmetric channel in order
to generate a uniform output $\mathbf{Z}_{(m)}$. This is the case
illustrated in Figure~\ref{fig:puxp}; nonetheless, note from this
figure that $p(\mathbf{u}|x')$ does not differ excessively from a
uniform distribution for several amino acids $x'$.  A justification of this
behaviour is as follows. Using the fact that conditioning cannot
increase entropy, a suboptimal maximisation approach is given by
maximising the lower bound
$H(\mathbf{Z}_{(m)}|X')=\sum_{x'\in\mathcal{X}'}p(x') H(\mathbf{Z}_{(m)}|x')
\le H(\mathbf{Z}_{(m)})$.  This requires maximising
$H(\mathbf{Z}_{(m)}|x')=-\sum_{\mathbf{z}\in\mathcal{X}^3}
p(\mathbf{z}|x')\log p(\mathbf{z}|x')$ for all $x'$. Observing from
Table~\ref{tab:amino} that  codons mapping to the same
amino acid share in many cases up to two bases, we can approximate
$p(\mathbf{z}|x')\approx 0$ when $\mathbf{z}\notin \mathcal{S}_{x'}$
and $p(\mathbf{z}|x')\approx
\left(\sum_{\mathbf{v}\in\mathcal{S}_{x'}}
  p(\mathbf{z}|\mathbf{v})\right)^{-1}
\sum_{\mathbf{u}\in\mathcal{S}_{x'}} p(\mathbf{z}|\mathbf{u})
p(\mathbf{u}|x')$ when $\mathbf{z}\in \mathcal{S}_{x'}$. With this
approximation, whenever $\sum_{\mathbf{u}\in\mathcal{S}_{x'}}
p(\mathbf{z}|\mathbf{u})$ is constant for all
$\mathbf{z}\in\mathcal{S}_{x'}$, choosing $p(\mathbf{u}|x')$ to be
uniform implies that $p(\mathbf{z}|x')$ is also uniform, which
maximises $H(\mathbf{Z}_{(m)}|x')$. It can be verified that this condition
holds for all $x'$ such that $|\mathcal{S}_{x'}|=1,2,4$, which
accounts for 16 out of the 21 elements in $\mathcal{X}'$.

Figures~\ref{fig:Rg1_1}-\ref{fig:Rg01_2} present the achievable rates
for several distributions of $X'$. Shown are the rates for the
distributions corresponding to two real genes: Ypt7
(\textit{S. Cerevisiae}) and FtsZ (\textit{B. Subtilis}), whose
GenBank accession numbers are NC\_001145 and NC\_000964, respectively;
also depicted are the rate~\eqref{eq:rate_unif} for $\mathbf{X}$
uniform and the rate for the deterministic distribution of $X'$ with
outcome $\mathrm{Ser}$, which, as we will discuss in
Section~\ref{sec:capacity-analysis}, yields capacity. We observe in
these plots that there is barely a difference in the results obtained
with the Blahut-Arimoto algorithm and the uniform approximation to
$p(\mathbf{u}|x')$ that we have discussed.
\begin{figure}[t]
  \centering
  \psfrag{a}[l][l]{\tiny Blahut-Arimoto}
  \psfrag{b}[l][l]{\tiny $p(\mathbf{u}|x')=1/|\mathcal{S}_{x'}|$}
  \psfrag{ftsz}[l][l]{\tiny \begin{turn}{30}ftsZ\end{turn}}
  \psfrag{ypt7}[l][l]{\tiny \begin{turn}{30}ypt7\end{turn}}
  \psfrag{unif}[l][l]{\tiny \begin{turn}{30}$\alpha(\mathrm{unif})$\end{turn}}
  \psfrag{ser}[l][l]{\tiny \begin{turn}{30}Ser\end{turn}}
  \psfrag{m}[t][]{\small Cascaded mutation stages ($m$)}
  \psfrag{RR}[][l]{\small $R_\mathrm{c}^{X'}$, bits/codon}
  \begin{minipage}{0.5\linewidth}
    \centering
    \includegraphics[width=7.5cm]{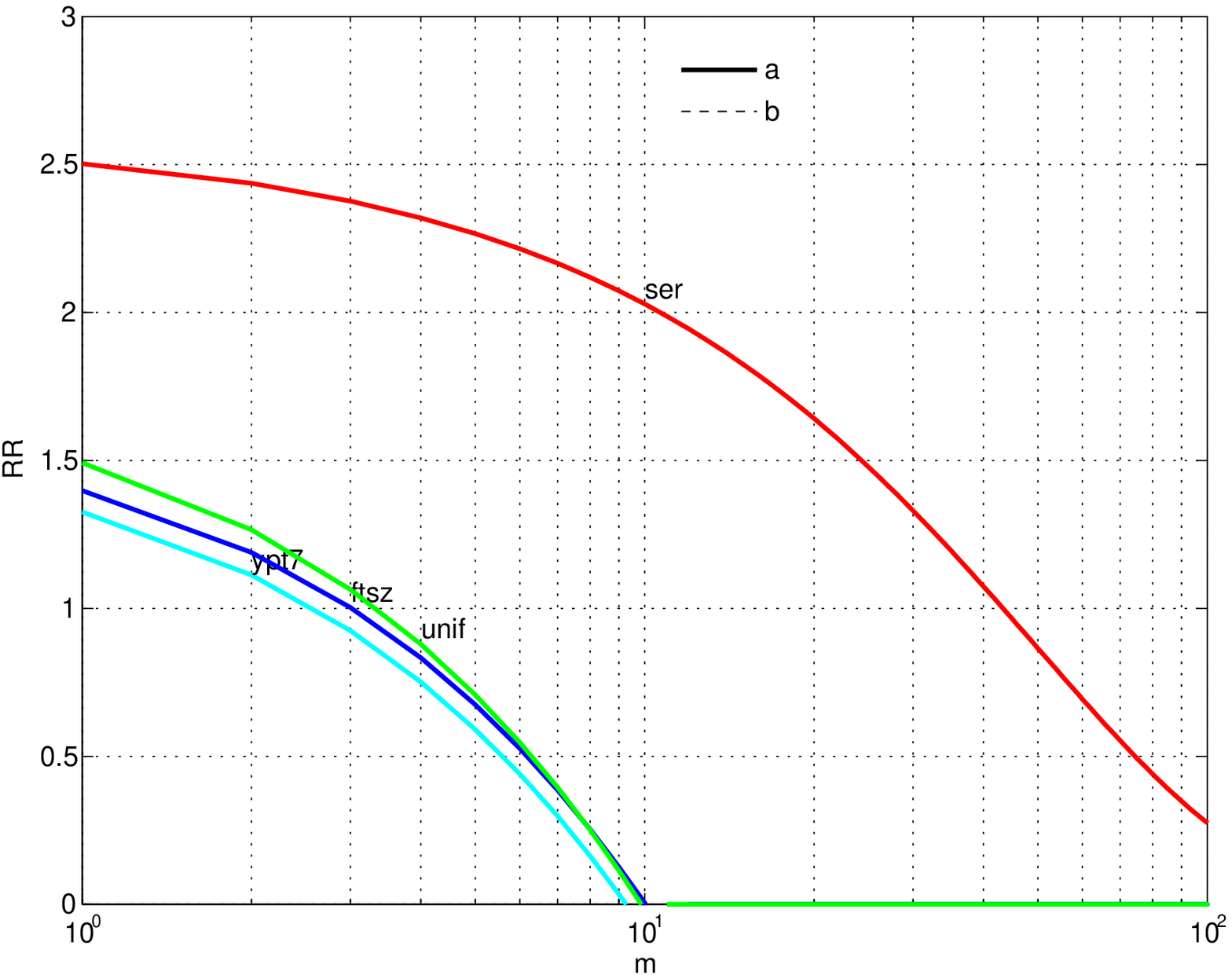}
    \caption{Embedding rate in cDNA for different distributions of
      $X'$ ($\gamma=1, q=10^{-2}$)}
    \label{fig:Rg1_1}
  \end{minipage}~\begin{minipage}{0.5\linewidth}
    \includegraphics[width=7.5cm]{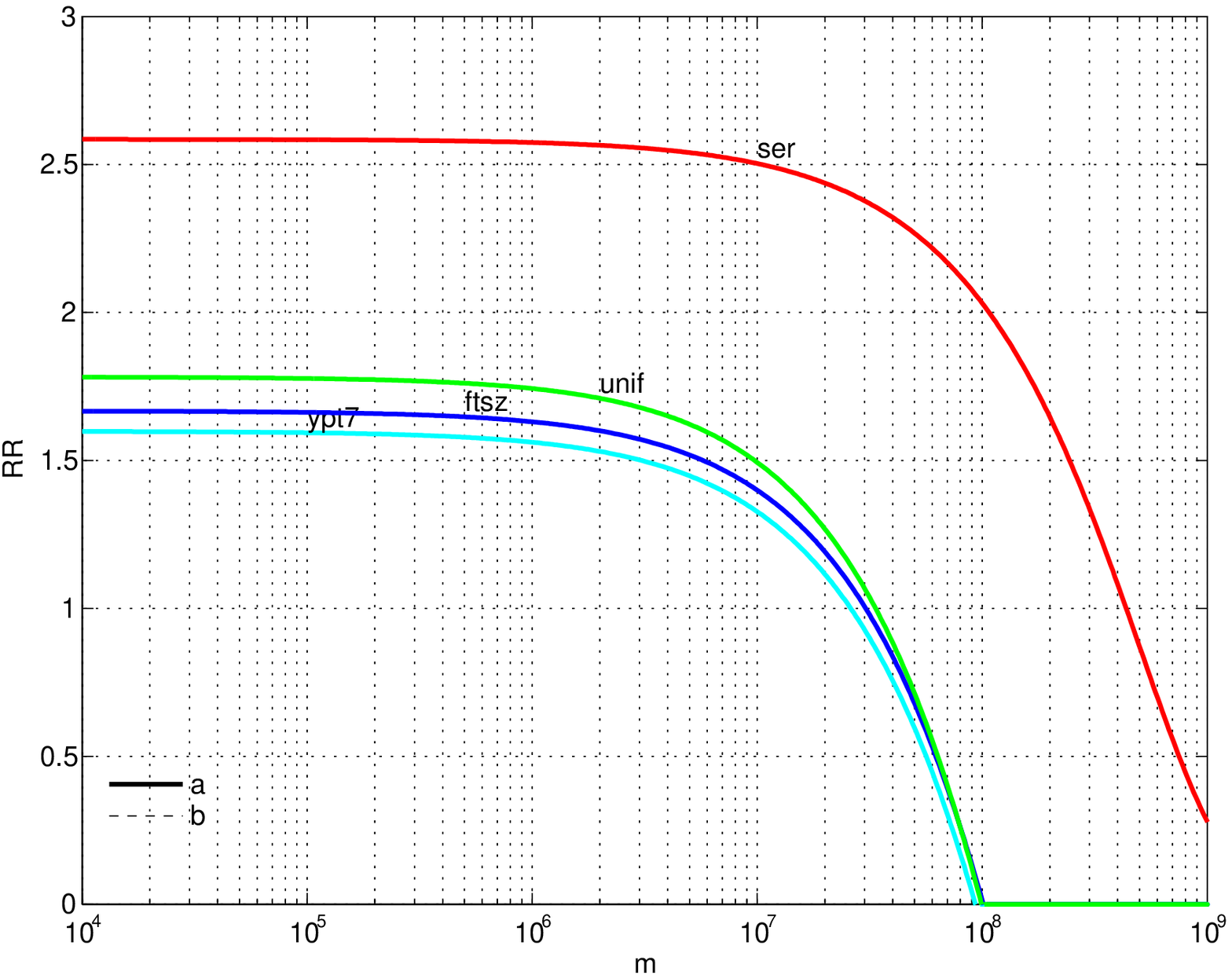}
    \caption{Embedding rate in cDNA for different  distributions of
      $X'$ ($\gamma=1, q=10^{-9}$)}
    \label{fig:Rg1_2}
  \end{minipage} 
\end{figure}

\begin{figure}[t]
  \centering
  \psfrag{a}[l][l]{\tiny Blahut-Arimoto}
  \psfrag{b}[l][l]{\tiny $p(\mathbf{u}|x')=1/|\mathcal{S}_{x'}|$}
  \psfrag{ftsz}[l][l]{\tiny \begin{turn}{30}ftsZ\end{turn}}
  \psfrag{ypt7}[l][l]{\tiny \begin{turn}{30}ypt7\end{turn}}
  \psfrag{unif}[l][l]{\tiny \begin{turn}{30}$\alpha(\mathrm{unif})$\end{turn}}
  \psfrag{ser}[l][l]{\tiny \begin{turn}{30}Ser\end{turn}}
  \psfrag{m}[t][]{\small Cascaded mutation stages ($m$)}
  \psfrag{RR}[][l]{\small $R_\mathrm{c}^{X'}$, bits/codon}
  \begin{minipage}{0.5\linewidth}
    \centering
    \includegraphics[width=7.5cm]{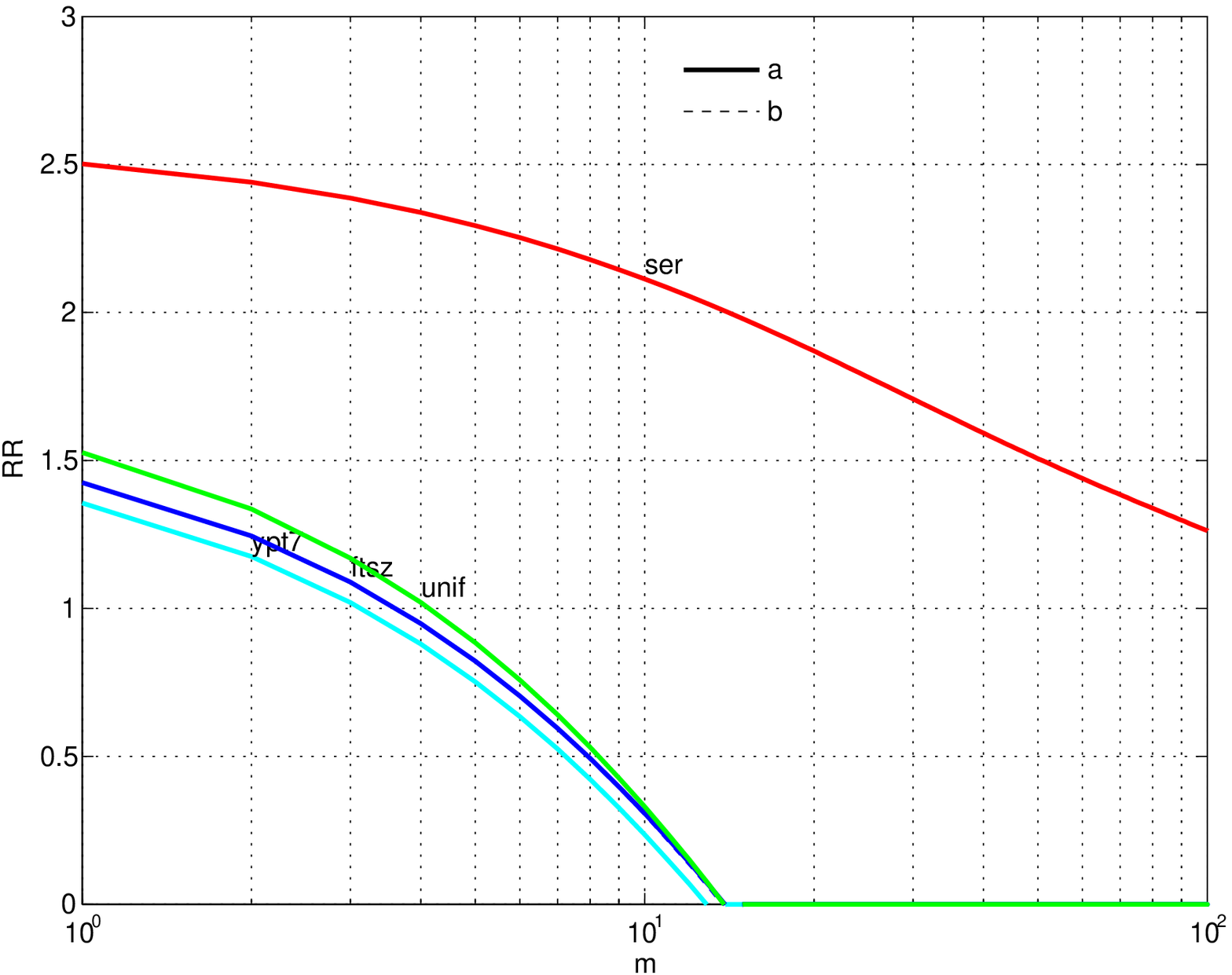}
    \caption{Embedding rate in cDNA for different distributions of
      $X'$ ($\gamma=0.1, q=10^{-2}$)}
    \label{fig:Rg01_1}
  \end{minipage}~\begin{minipage}{0.5\linewidth}
    \includegraphics[width=7.5cm]{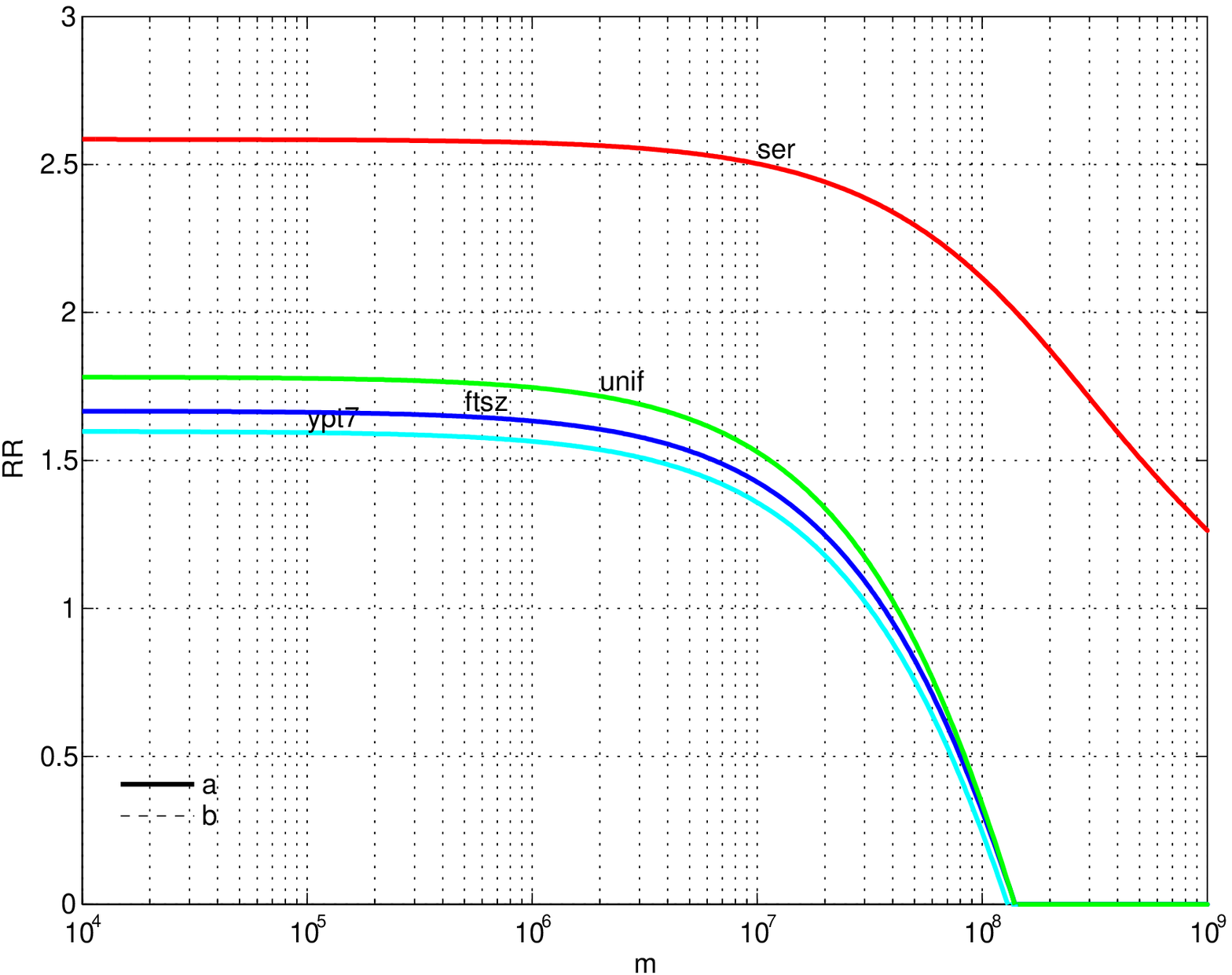}
    \caption{Embedding rate in cDNA for different distributions of
      $X'$ ($\gamma=0.1, q=10^{-9}$)}
    \label{fig:Rg01_2}
  \end{minipage} 
\end{figure}
   
\paragraph{Codon statistics preservation} If we require that the
original codon statistics of the host are preserved in the
information-carrying sequence, then we must peg $p(\mathbf{u}|x')$ to
the corresponding distribution of the host. Therefore in this case no
maximisation is required, and the corresponding rate will be lower or
equal than the one achieved without codon statistics
preservation. This type of constraint is equivalent to a
steganographic constraint in data hiding, since the pmf of the host is
preserved in the information-carrying sequence. A comparison of
maximum rates and codon statistics preservation rates for the same
genes as before is given in Figure~\ref{fig:R_steg}. Note that in the
uniform case both rates coincide because of Lemma~\ref{lem:unif}.

\begin{figure}[t]
  \centering
  \psfrag{a}[l][l]{\tiny Codon statistics preservation
    rate}
  \psfrag{b}[l][l]{\tiny Maximum rate (Blahut-Arimoto)}
  \psfrag{ftsz}[l][l]{\tiny \begin{turn}{30}ftsZ\end{turn}}
  \psfrag{ypt7}[l][l]{\tiny \begin{turn}{30}ypt7\end{turn}}
  \psfrag{unif}[l][l]{\tiny \begin{turn}{30}$\alpha(\mathrm{unif})$\end{turn}}
  \psfrag{m}[t][]{\small Cascaded mutation stages ($m$)}
  \psfrag{RR}[][l]{\small $R_\mathrm{c}^{X'}$, bits/codon}
  \centering
  \includegraphics[width=7.5cm]{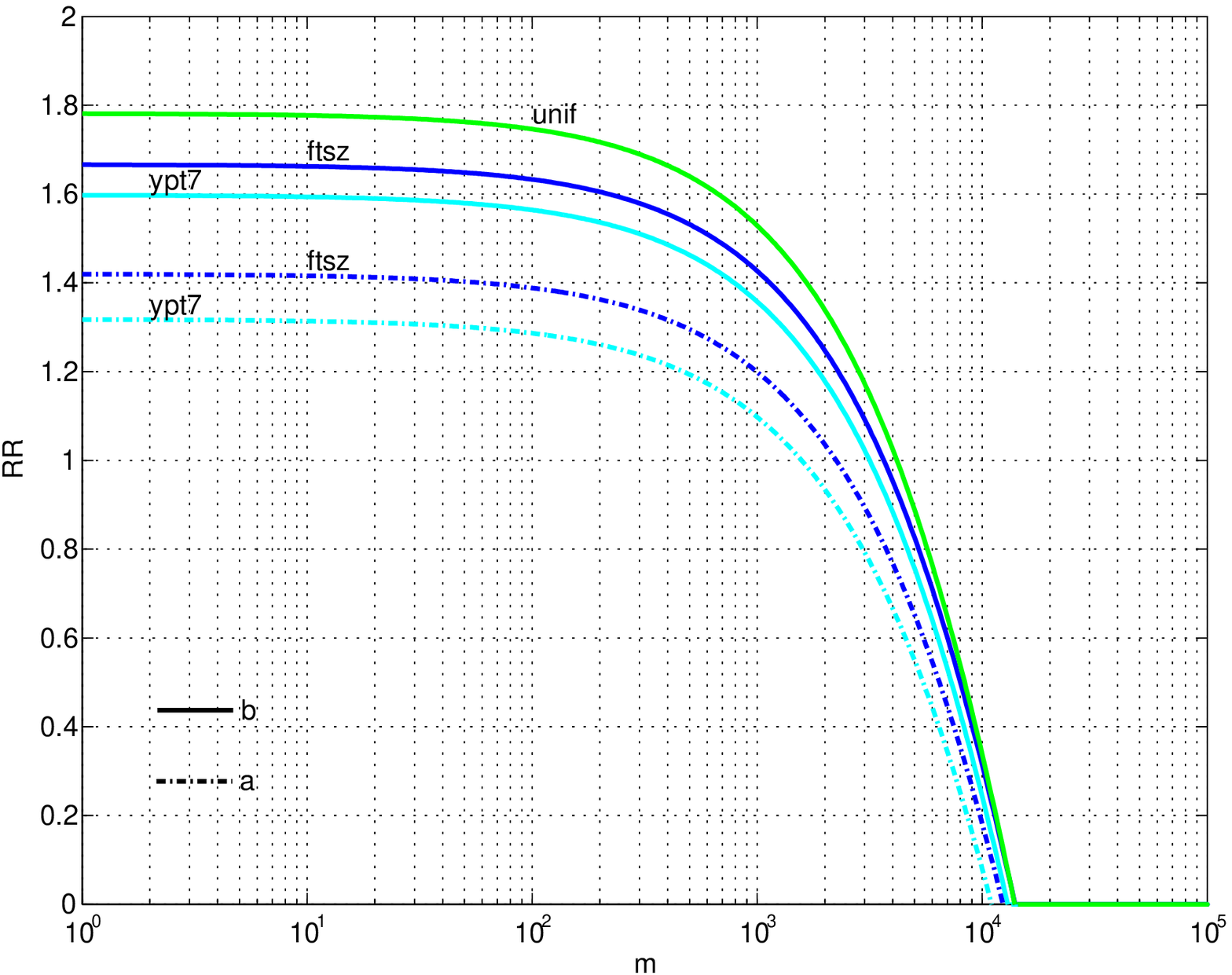}
  \caption{Comparison of cDNA embedding rate with and without
    codon statistics preservation constraints, for different distributions of $X'$
    ($\gamma=0.1, q=10^{-5}$)}
  \label{fig:R_steg}
\end{figure}

\subsubsection{Capacity}\label{sec:capacity-analysis}
  It remains the computation of capacity, that is,
  \begin{equation}
    \label{eq:capacity}
    C_\mathrm{c}=\max_{p(x')}\, R_\mathrm{c}^{X'}\text{ bits/codon}.
  \end{equation}
  It is simple to explicitly obtain $C_\mathrm{c}$ in two particular
  cases discussed in Section~\ref{sec:achi-rate-analys}.
  \begin{itemize}
  \item $\gamma=1$ and $q=3/4$: obviously,
    $C_\mathrm{c}|_{\gamma=1,q=3/4}=R_\mathrm{c}^{X'}|_{\gamma=1,q=3/4}=0$. However,
    the point that has to be made here is that, although this is true
    for any $X'$, only deterministic $X'$ yields $H(X')=0$ exactly,
    and hence, by the continuity of the rate functional, this will be
    the best strategy when approaching $q=3/4$ from the left.
  \item $q=0$: From Lemma~\ref{lem:rq0},
    $C_\mathrm{c}|_{q=0}=\max_{y'}\log|\mathcal{S}_{y'}|$. Since
    $|\mathcal{S}_{x'}|=6$ is maximum for all
    $x'\in\mathcal{W}'\triangleq\{$Ser, Leu, Arg$\}$, a distribution
    of $X'$ that maximises~\eqref{eq:rate_q0} is any for which
    $\sum_{x'\in\mathcal{W}'}p(x')=1$. Note that $X'$ needs not be
    deterministic. Capacity is then $C_\mathrm{c}|_{q=0}=\log{6}=
    2.5850$ bits/codon.
  \end{itemize}
  
\noindent\textbf{Remark.}  A trivial upper bound for
any~$ q$ is $C_\mathrm{c}\le C_\mathrm{c}|_{q=0}$. Since
$C_\mathrm{c}|_{q=0}< 3\, C_\mathrm{nc}|_{q=0}=6$, then side-informed
cDNA data embedding capacity will not be able to achieve
non-side-informed ncDNA capacity for every mutation rate. This is
similar to parallel results in side-informed encoding with discrete
hosts~\cite{pradhan03:duality,Barron03} (for uniform side
information), and unlike the well-known result by Costa for continuous
Gaussian hosts~\cite{Costa83}.

From our previous discussion on the value of $C_\mathrm{c}$ for two particular
cases one may conjecture that a pmf with support in $\mathcal{W}'$ may
be capacity-achieving. The actual capacity-achieving strategy is given
by the following theorem:
\begin{thm}\label{thm:capacity}
  Capacity is achieved by the deterministic pmf of $X'$ that maximises
  $H(\mathbf{Z}_{(m)})$.  
\end{thm}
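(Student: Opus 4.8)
The plan is to collapse the nested maximisation in~\eqref{eq:capacity} over $p(x')$ and $p(\mathbf{u}|x')$ into a single bound on the functional $H(\mathbf{Z}_{(m)})-H(X')$. First I would record that, because the codon channel~\eqref{eq:kron} and its $m$-fold iterate $\boldsymbol{\Pi}^m$ are symmetric, the conditional entropy $H(\mathbf{Z}_{(m)}|\mathbf{U})$ is the entropy of an arbitrary row of $\boldsymbol{\Pi}^m$ and hence a constant $\eta$, independent of $p(\mathbf{u}|x')$ and of $p(x')$. Combining this with $\mathbf{Y}=\mathbf{U}$ and $H(X'|\mathbf{U})=0$ (Section~\ref{sec:achi-rate-analys}),~\eqref{eq:achievable_rate} and~\eqref{eq:capacity} give
\begin{equation*}
  C_\mathrm{c}=\max_{p(x')}\ \max_{p(\mathbf{u}|x')}\ \left(H(\mathbf{Z}_{(m)})-H(X')\right)\;-\;\eta\quad\text{bits/codon},
\end{equation*}
so the whole theorem reduces to bounding $H(\mathbf{Z}_{(m)})-H(X')$ over all feasible pairs.

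The key step will be the chain
\begin{equation*}
  H(\mathbf{Z}_{(m)})-H(X')\;\le\;H(\mathbf{Z}_{(m)}|X')\;=\;\sum_{x'\in\mathcal{X}'}p(x')\,H(\mathbf{Z}_{(m)}|x')\;\le\;\sum_{x'\in\mathcal{X}'}p(x')\,\widetilde{H}_{x'}\;\le\;\max_{x'\in\mathcal{X}'}\widetilde{H}_{x'},
\end{equation*}
where the first inequality is just $H(\mathbf{Z}_{(m)})-H(X')=H(\mathbf{Z}_{(m)}|X')-H(X'|\mathbf{Z}_{(m)})\le H(\mathbf{Z}_{(m)}|X')$; the equality is the definition of conditional entropy (recall the support of $\mathbf{U}|x'$ is $\mathcal{S}_{x'}$); $\widetilde{H}_{x'}\triangleq\max_{p(\mathbf{u}|x')}H(\mathbf{Z}_{(m)}|x')$ is the largest output entropy attainable when the codon input is confined to $\mathcal{S}_{x'}$ (a maximum that exists by compactness of the simplex on $\mathcal{S}_{x'}$ and continuity of entropy), so it dominates each term; and the last step is that an average never exceeds its maximum. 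This yields $C_\mathrm{c}\le\max_{x'}\widetilde{H}_{x'}-\eta$.

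To close the argument I would exhibit an achiever. Let $x^\star\in\argmax_{x'}\widetilde{H}_{x'}$ and take $X'$ deterministic with outcome $x^\star$, paired with the pmf $p(\mathbf{u}|x^\star)$ realising $\widetilde{H}_{x^\star}$. Then $H(X')=0$, $X'$ is a constant so $H(X'|\mathbf{Z}_{(m)})=0$ and $H(\mathbf{Z}_{(m)})=H(\mathbf{Z}_{(m)}|x^\star)=\widetilde{H}_{x^\star}$; consequently every inequality in the displayed chain is simultaneously tight and $R_\mathrm{c}^{X'}=\widetilde{H}_{x^\star}-\eta=C_\mathrm{c}$. Since, as noted after~\eqref{eq:achievable_rate}, for a fixed $X'$ maximising $R_\mathrm{c}^{X'}$ over $p(\mathbf{u}|x')$ is exactly the constrained maximisation of $H(\mathbf{Z}_{(m)})$, the chosen pmf is precisely ``the deterministic pmf of $X'$ that maximises $H(\mathbf{Z}_{(m)})$'', which is the claim.

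The only genuinely substantive point is the opening inequality $H(\mathbf{Z}_{(m)})-H(X')\le H(\mathbf{Z}_{(m)}|X')$ together with the bookkeeping of which variable is optimised over which; once $\eta$ is identified as a channel constant, the rest is the ``average $\le$ maximum'' observation plus the remark that a deterministic $X'$ costs nothing in $H(X')$ while losing nothing in $H(\mathbf{Z}_{(m)})$, so all slack vanishes. Minor items to dispatch are the attainment of the maxima defining $\widetilde{H}_{x'}$ and $C_\mathrm{c}$ (compactness), and checking that for deterministic $X'$ the quantity $H(\mathbf{Z}_{(m)}|X')$ really equals $H(\mathbf{Z}_{(m)})$ so that no hidden slack remains there; both are routine. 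As a sanity check, specialising to $q=0$ gives $\eta=0$ and $\widetilde{H}_{x'}=\log|\mathcal{S}_{x'}|$, so $C_\mathrm{c}=\max_{x'}\log|\mathcal{S}_{x'}|=\log 6$, consistent with the value computed earlier for the mutation-free case.
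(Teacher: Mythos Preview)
Your argument is correct and, in fact, cleaner than the paper's. The paper proves Theorem~\ref{thm:capacity} by writing down the Lagrangian for the outer maximisation in~\eqref{eq:capacity}, differentiating with respect to $p(x')$ to obtain the stationarity conditions~\eqref{eq:difffunctional}, multiplying through by $p(x')$, and then arguing ``by inspection'' that every extremal of the Lagrangian must be deterministic; once that is granted, the remaining search is over the~21 point masses and one picks the one that maximises $H(\mathbf{Z}_{(m)})$. Your route bypasses the Lagrangian altogether: after isolating the channel constant $\eta=H(\mathbf{Z}_{(m)}|\mathbf{U})$, the single inequality $H(\mathbf{Z}_{(m)})-H(X')=H(\mathbf{Z}_{(m)}|X')-H(X'|\mathbf{Z}_{(m)})\le H(\mathbf{Z}_{(m)}|X')$, followed by ``average $\le$ maximum'', gives the upper bound $C_\mathrm{c}\le\max_{x'}\widetilde{H}_{x'}-\eta$, and a deterministic $X'$ realises it.

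What each approach buys: your proof is self-contained, uses only the symmetry of $\boldsymbol{\Pi}^m$ and elementary entropy identities, and avoids the paper's somewhat informal ``by inspection'' step. The paper's Lagrangian route, while less tidy for this particular claim, is what motivates the subsequent closed-form approximation~\eqref{eq:cond_approx}--\eqref{eq:widetildepi} to the optimal $p(\mathbf{u}|\xi')$, since those conditions are exactly the stationarity equations for the inner problem once $X'$ has been fixed to a point mass. So the two derivations are complementary: yours is the shorter proof of the theorem as stated, the paper's sets up the machinery reused in the appendix's linearised approximation.
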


\begin{proof}
   See Appendix~\ref{sec:capac-achi-strat}.
 \end{proof}

\noindent\textbf{Remarks.} Denoting as $\xi'$ the deterministic outcome of
$X'$, it can be numerically verified that $\xi'=\mathrm{Ser}$
maximises~$H(\mathbf{Z}_{(m)})$ for all $\gamma\le 1$, $m$, and $q$,
and thus $C_\mathrm{c}=R_\mathrm{c}^\mathrm{Ser}$ in these
conditions. Some examples of the rates achievable with deterministic
$X'$ are shown in
Figures~\ref{fig:Rc_g1q1e_2}-\ref{fig:Rc_g1e_1q1e_9}. These figures
show that the rates using the linearised approximation given in
Appendix~\ref{sec:capac-achi-strat} are practically indistinguishable
from ones using the Blahut-Arimoto algorithm, whereas the
approximation $p(\mathbf{u}|x')=1/|\mathcal{S}_{x'}|$ is also good but
worsens as $\gamma$ decreases. 

\begin{figure}[t]
     \psfrag{A}[][]{\color{blue}\begin{turn}{30}\tiny Ala\end{turn}}
     \psfrag{R}[][]{\color{blue}\begin{turn}{30}\tiny Arg\end{turn}}
     \psfrag{N}[][]{\color{blue}\begin{turn}{30}\tiny Asn\end{turn}}
     \psfrag{D}[][]{\color{blue}\begin{turn}{30}\tiny Asp\end{turn}}
     \psfrag{C}[][]{\color{blue}\begin{turn}{30}\tiny Cys\end{turn}}
     \psfrag{Q}[][]{\color{blue}\begin{turn}{30}\tiny Gln\end{turn}}
     \psfrag{E}[][]{\color{blue}\begin{turn}{30}\tiny Glu\end{turn}}
     \psfrag{G}[][]{\color{blue}\begin{turn}{30}\tiny Gly\end{turn}}
     \psfrag{H}[][]{\color{blue}\begin{turn}{30}\tiny His\end{turn}}
     \psfrag{I}[][]{\color{blue}\begin{turn}{30}\tiny Ile\end{turn}}
     \psfrag{L}[][]{\color{blue}\begin{turn}{30}\tiny Leu\end{turn}}
     \psfrag{K}[][]{\color{blue}\begin{turn}{30}\tiny Lys\end{turn}}
     \psfrag{M}[][]{\color{blue}\begin{turn}{30}\tiny Met\end{turn}}
     \psfrag{F}[][]{\color{blue}\begin{turn}{30}\tiny Phe\end{turn}}
     \psfrag{P}[][]{\color{blue}\begin{turn}{30}\tiny Pro\end{turn}}
     \psfrag{S}[][]{\color{red}\begin{turn}{30}\tiny Ser\end{turn}}
     \psfrag{T}[][]{\color{blue}\begin{turn}{30}\tiny Thr\end{turn}}
     \psfrag{W}[][]{\color{blue}\begin{turn}{30}\tiny Trp\end{turn}}
     \psfrag{Y}[][]{\color{blue}\begin{turn}{30}\tiny Tyr\end{turn}}
     \psfrag{V}[][]{\color{blue}\begin{turn}{30}\tiny Val\end{turn}}
     \psfrag{B}[][]{\color{blue}\begin{turn}{30}\tiny
         \textit{Stp}\end{turn}}
     \psfrag{m}[t][]{Cascaded mutation stages ($m$)}
     \psfrag{RR}[b][]{$R_\mathrm{c}^{\xi'}$, bits/codon}
     \psfrag{a}[l][l]{\tiny Blahut-Arimoto}
     \psfrag{b}[l][l]{\tiny Linearised approximation}
     \psfrag{c}[l][l]{\tiny $p(\mathbf{u}|x')=1/|\mathcal{S}_{x'}|$}

     \begin{minipage}[b]{0.5\linewidth}
       \includegraphics[width=8cm]{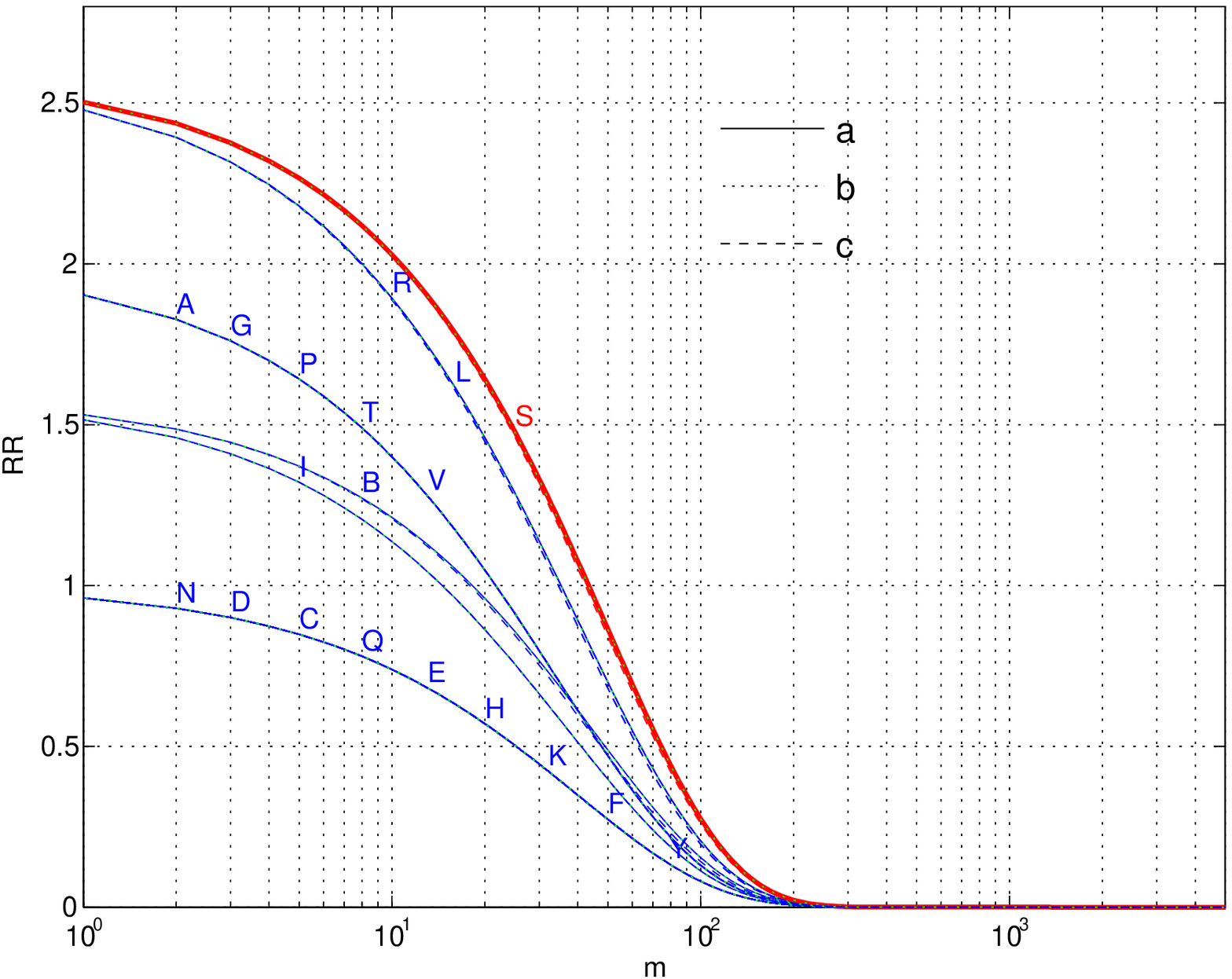}
       \caption{Achievable cDNA data embedding rates for deterministic
         $X'$ ($\gamma=1$, $q=10^{-2}$).}
       \label{fig:Rc_g1q1e_2}
     \end{minipage}~
     \begin{minipage}[b]{0.5\linewidth}
       \includegraphics[width=8cm]{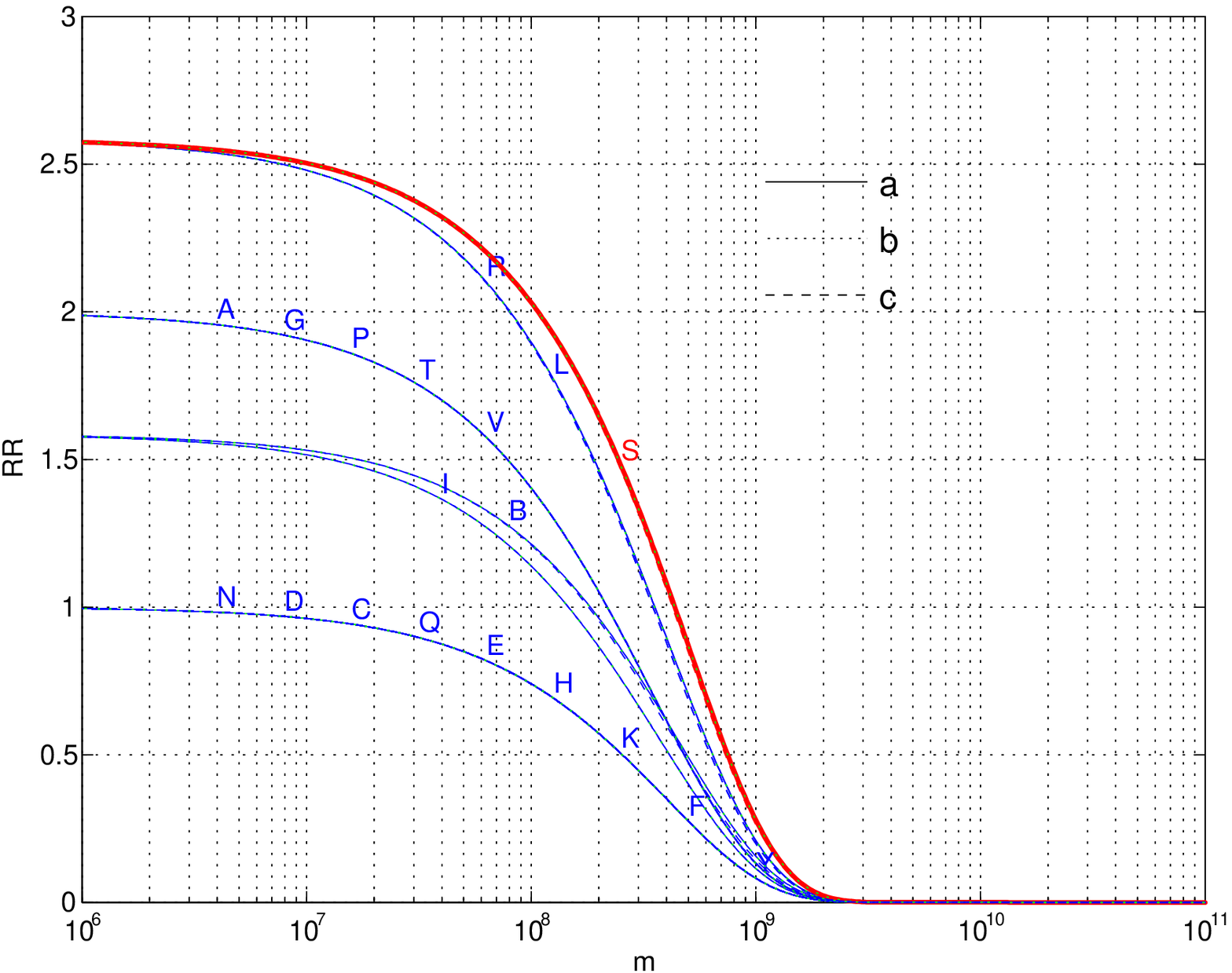}
       \caption{Achievable cDNA data embedding rates for deterministic $X'$ ($\gamma=1$, $q=10^{-9}$).}
       \label{fig:Rc_g1q1e_9}
     \end{minipage}
   \end{figure}

\begin{figure}[t]
     \psfrag{A}[][]{\color{blue}\begin{turn}{30}\tiny Ala\end{turn}}
     \psfrag{R}[][]{\color{blue}\begin{turn}{30}\tiny Arg\end{turn}}
     \psfrag{N}[][]{\color{blue}\begin{turn}{30}\tiny Asn\end{turn}}
     \psfrag{D}[][]{\color{blue}\begin{turn}{30}\tiny Asp\end{turn}}
     \psfrag{C}[][]{\color{blue}\begin{turn}{30}\tiny Cys\end{turn}}
     \psfrag{Q}[][]{\color{blue}\begin{turn}{30}\tiny Gln\end{turn}}
     \psfrag{E}[][]{\color{blue}\begin{turn}{30}\tiny Glu\end{turn}}
     \psfrag{G}[][]{\color{blue}\begin{turn}{30}\tiny Gly\end{turn}}
     \psfrag{H}[][]{\color{blue}\begin{turn}{30}\tiny His\end{turn}}
     \psfrag{I}[][]{\color{blue}\begin{turn}{30}\tiny Ile\end{turn}}
     \psfrag{L}[][]{\color{blue}\begin{turn}{30}\tiny Leu\end{turn}}
     \psfrag{K}[][]{\color{blue}\begin{turn}{30}\tiny Lys\end{turn}}
     \psfrag{M}[][]{\color{blue}\begin{turn}{30}\tiny Met\end{turn}}
     \psfrag{F}[][]{\color{blue}\begin{turn}{30}\tiny Phe\end{turn}}
     \psfrag{P}[][]{\color{blue}\begin{turn}{30}\tiny Pro\end{turn}}
     \psfrag{S}[][]{\color{red}\begin{turn}{30}\tiny Ser\end{turn}}
     \psfrag{T}[][]{\color{blue}\begin{turn}{30}\tiny Thr\end{turn}}
     \psfrag{W}[][]{\color{blue}\begin{turn}{30}\tiny Trp\end{turn}}
     \psfrag{Y}[][]{\color{blue}\begin{turn}{30}\tiny Tyr\end{turn}}
     \psfrag{V}[][]{\color{blue}\begin{turn}{30}\tiny Val\end{turn}}
     \psfrag{B}[][]{\color{blue}\begin{turn}{30}\tiny
         \textit{Stp}\end{turn}}
     \psfrag{m}[t][]{Cascaded mutation stages ($m$)}
     \psfrag{RR}[b][]{\small $R_\mathrm{c}^{\xi'}$, bits/codon}
     \psfrag{a}[l][l]{\tiny Blahut-Arimoto}
     \psfrag{b}[l][l]{\tiny Linearised approximation}
     \psfrag{c}[l][l]{\tiny $p(\mathbf{u}|x')=1/|\mathcal{S}_{x'}|$}

     \begin{minipage}[b]{0.5\linewidth}
       \includegraphics[width=8cm]{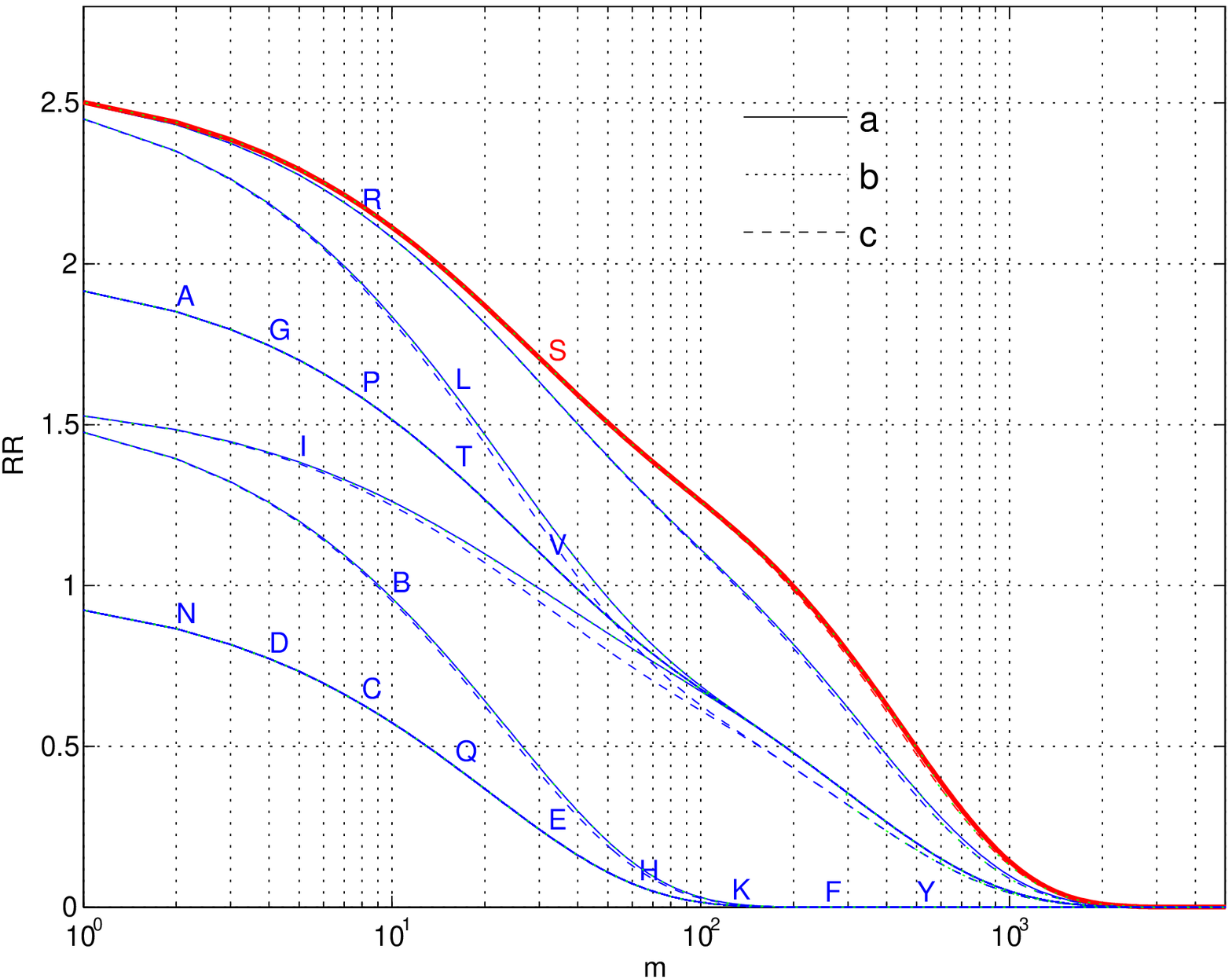}
       \caption{Achievable cDNA data embedding rates for deterministic
         $X'$ ($\gamma=0.1$, $q=10^{-2}$).}
       \label{fig:Rc_g1e_1q1e_2}
     \end{minipage}~
     \begin{minipage}[b]{0.5\linewidth}
       \includegraphics[width=8cm]{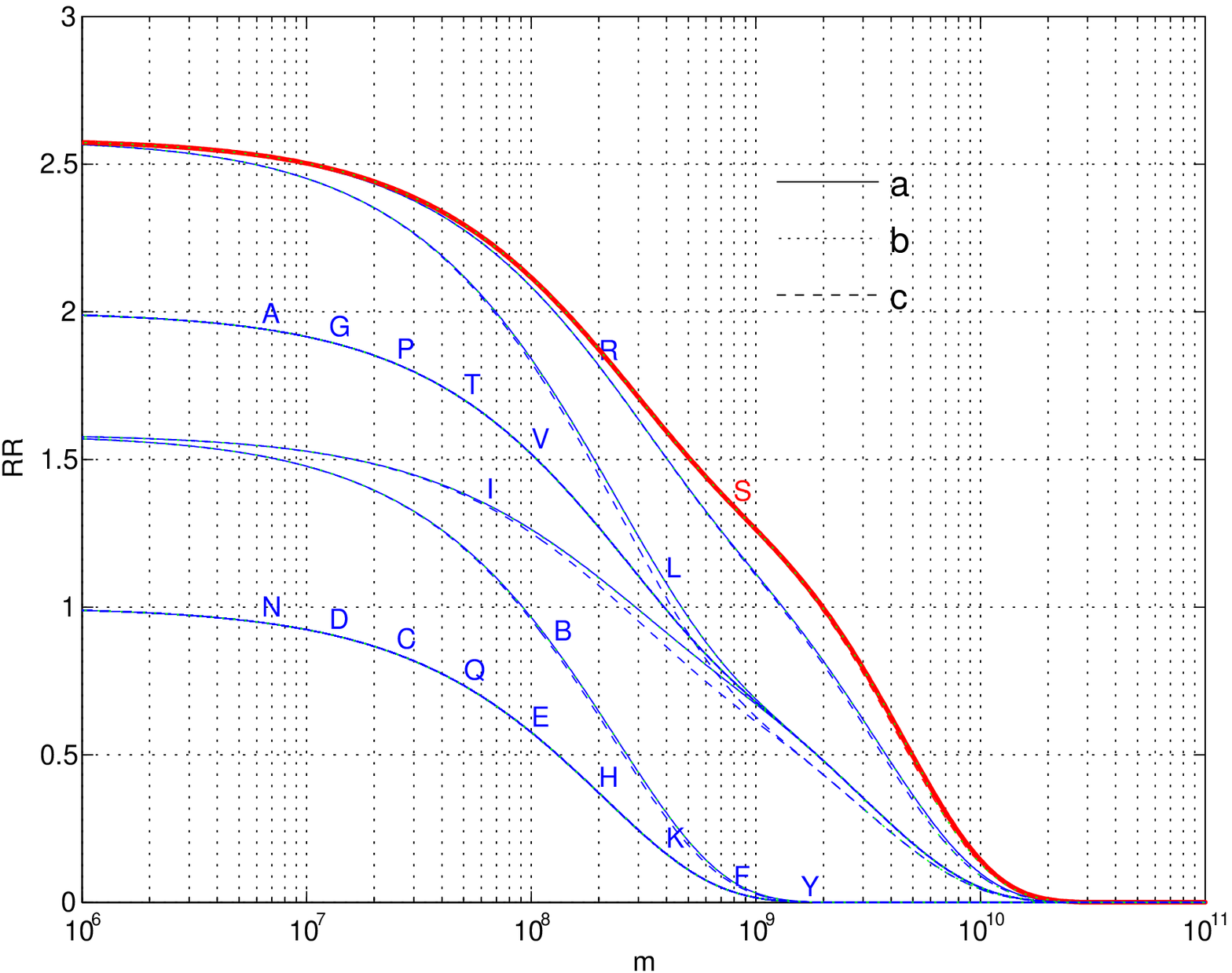}
       \caption{Achievable cDNA data embedding rates for deterministic $X'$ ($\gamma=0.1$, $q=10^{-9}$).}
       \label{fig:Rc_g1e_1q1e_9}
     \end{minipage}
   \end{figure}

\paragraph{Biological interpretations} 
The results in this section solely concern artificial embedding of
information in cDNA, and thus seem to have less obvious applicability
in biological terms than the ones concerning ncDNA. However an
intriguing phenomenon which somehow indicates a biological connection
of these results can be observed in
Figures~\ref{fig:Rc_g1e_1q1e_2}-\ref{fig:Rc_g1e_1q1e_9} which concern
achievable rates for sequences encoding a single amino
acid/symbol. The effect is observed for $\gamma<1$ ---that is, the
range of $\gamma$ in which the model is more realistic--- and consists of
a rate droop for two particular values of $\xi'$ as $m\to\infty$ with
respect to all other symbols presenting the same multiplicity. The
particularity is that these two values of $\xi'$ correspond to the
stop symbol (\textit{Stp}) and the amino acid Leu, which happens to
double as start codon in prokaryotes. Therefore all stop codons and
most of the start codons seem to be less suited to carrying extra
information (redundancy) when isolated. It is not obvious how to
interpret this effect, but one may surmise that these codons might
have suffered some type of selective pressure during the emergence of
the genetic code which somehow depended on the information theoretical
amount studied here. In the case of the stop symbol, this may be due
to the fact that it can only appear once per gene, and so it makes for
a bad conveyor of extra information beyond its basic function.

\section{Conclusions}
\label{sec:conclusions}

We have provided an analysis of the embedding capacity of DNA when
mutations are modelled according to the Kimura model from molecular
evolution studies, and discussed some biological connections of these
results. A more thorough study would require considering insertion and
deletion mutations (\textit{indels}). Although the exact computation
of capacity under indels is an unsolved problem in most digital
communications scenarios, some approximations relying on realignment
methods from bioinformatics might suffice in this
context. Generalisations of the Kimura model may also be considered.
Although in general they will lead to nonsymmetric channels, these can
be numerically handled using the Blahut-Arimoto algorithm.

\appendix

\subsection{Capacity-achieving strategy $p(x')$}
\label{sec:capac-achi-strat}
In order to find the capacity-achieving strategy we need to solve
\begin{equation}
  \frac{\partial}{\partial  p(x')}\left[H(\mathbf{Z}_{(m)})-H(X')+\nu\left(\sum_{y'\in\mathcal{X'}}p(y')-1\right)\right]=0\label{eq:diff}
\end{equation}
for $x'\in\mathcal{X}'$, with $\nu$ a Lagrange multiplier. In the
following we will write
$p(\mathbf{z}|x')=p(\mathbf{Z}_{(m)}=\mathbf{z}|X'=x')$ for notational
convenience.  Assuming natural logarithms for simplicity, and using
$\partial p(\mathbf{z})/\partial p(x')=p(\mathbf{z}|x')$,
\eqref{eq:diff} becomes
\begin{equation}
  \label{eq:difffunctional}
  \sum_{\mathbf{z}\in\mathcal{X}^3} p(\mathbf{z}|x')\log\left( \sum_{y'\in\mathcal{X'}} p(y') p(\mathbf{z}|y')\right)=\log p(x')+\nu,
\end{equation}
for  $x'\in\mathcal{X}'$.  The solution remains unchanged if we
multiply~\eqref{eq:difffunctional} across by $p(x')$. This allows us
to see by inspection that any extreme of the Lagrangian
in~\eqref{eq:diff} has to be deterministic, that is, $p(x')=1$ for
some~$x'=\xi'$ and $p(x')=0$ for $x'\neq \xi'$. Note that this is in
agreement with the strategies for the cases $q=0$ and $\gamma=1$ with
$q=3/4$ discussed in Section~\ref{sec:capacity-analysis}.  See for
instance that a uniform distribution of $X'$ cannot possibly
solve~\eqref{eq:difffunctional} for all $x'\in\mathcal{X}'$, because
$\sum_{\mathbf{z}} p(\mathbf{z}|x')\log\left(\sum_{y'}
  p(\mathbf{z}|y')\right)$ is not constant on $x'$ unless $\gamma=1$
and $q= 3/4$, in which case we have shown that capacity is zero for
any distribution.

According to the previous discussion, for any capacity-achieving
solution it always holds that $H(X')=0$, and then we just have to
maximise $H(\mathbf{Z}_{(m)})$ over the ensemble of 21 deterministic
distributions of $X'$. 

The computation of $R_\mathrm{c}^{\xi'}$ and of the maximising
distribution $\mathbf{U}|\xi'$ can be done using the Blahut-Arimoto
algorithm, following the discussion in
Section~\ref{sec:achi-rate-analys} on the optimal strategy for fixed
$p(x')$. Note that $\xi'=\mathrm{Trp}$ and $\xi'=\mathrm{Met}$ can be
ruled out outright, since
$|\mathcal{S}_\mathrm{Trp}|=|\mathcal{S}_\mathrm{Met}|=1$, and then
only null rates are possible in these cases. Then we only need to
compute $R_\mathrm{c}^{\xi'}$ for 19 amino acids. Also,
$\xi'=\mathit{Stp}$ can only be considered hypothetically, since this
symbol can only appear exactly once in a gene.

\paragraph{Approximation to maximising strategy} It is also possible
to provide a closed-form approximation to the maximising distribution
$\mathbf{U}|\xi'$, which yields a better approximation to the
embedding rate than just using the approximation
$p(\mathbf{u}|\xi')=1/|\mathcal{S}_{\xi'}|$ discussed in
Section~\ref{sec:achi-rate-analys}. Observe firstly that when $X'$ is
deterministic the situation is equivalent to a non-side informed
discrete channel with $|\mathcal{S}_{\xi'}|$ inputs and
$|\mathcal{X}|^3$ outputs, with a transition probability matrix
${\boldsymbol\Lambda}$ whose rows are the rows of ${\boldsymbol
  \Pi}^m$ corresponding to the codons associated with~$\xi'$. In
general this channel will not be symmetric nor weakly symmetric, since
although its rows are permutations of the same set of probabilities,
its columns are not, and their sum is not constant either. However
$H(\mathbf{Z}_{(m)}|\mathbf{U})$ is still independent of the
distribution of $\mathbf{U}$, and then we only need to maximise
$H(\mathbf{Z}_{(m)})$ to find capacity. The corresponding conditions for
the maximum are
\begin{equation}
  \label{eq:cond_approx}
  \sum_{\mathbf{z}\in\mathcal{X}^3} p(\mathbf{z}|\mathbf{v})\log p(\mathbf{z})+1=\rho,
\end{equation}
for $\mathbf{v}\in\mathcal{S}_{\xi'}$, and with $\rho$ a Lagrange
multiplier.

Using $\log x\le x-1$ and
$p(\mathbf{z})=\sum_{\mathbf{u}\in\mathcal{S}_{\xi'}}
p(\mathbf{z}|\mathbf{u})p(\mathbf{u}|\xi')$,
we can  write
\begin{equation}
  \label{eq:cond_approx_lin}
  \sum_{\mathbf{z}\in\mathcal{X}^3} p(\mathbf{z}|\mathbf{v})\sum_{\mathbf{u}\in\mathcal{S}_{\xi'}} p(\mathbf{z}|\mathbf{u})p(\mathbf{u}|\xi')\le\rho,
\end{equation}
for $\mathbf{v}\in\mathcal{S}_{\xi'}$. Our approximation consists of
solving $p(\mathbf{u}|\xi')$ by enforcing equality
in~\eqref{eq:cond_approx_lin} for all
$\mathbf{v}\in\mathcal{S}_{\xi'}$. This yields the linear system
\begin{equation}
  \label{eq:linearised}
  {\boldsymbol \pi} \left({\boldsymbol\Lambda}{\boldsymbol\Lambda}^T\right) = \rho \mathbf{1},
\end{equation}
where the probabilities $p(\mathbf{u}|\xi')$, with
$\mathbf{u}\in\mathcal{S}_{\xi'}$, are the elements of the
$1\times|\mathcal{S}_{\xi'}|$ vector ${\boldsymbol \pi}$ (arranged in
the same codon order as the rows of ${\boldsymbol\Lambda}$), and
$\mathbf{1}$ is an all-ones vector of size
$1\times|\mathcal{S}_{\xi'}|$. Since ${\boldsymbol \pi}$ must be a
pmf, we may fix any arbitrary value of $\rho$, such as $\rho=1$, and
then normalise the solution $\widetilde{\boldsymbol\pi}$ to the
resulting linear system, that is
\begin{equation}\label{eq:widetildepi}
  \widetilde{\boldsymbol\pi}=\mathbf{1}({\boldsymbol\Lambda}
  {\boldsymbol\Lambda}^T)^{-1}.
\end{equation}
The matrix ${\boldsymbol\Lambda} {\boldsymbol\Lambda}^T$~is invertible
if both $q\neq 1/(4\gamma/3)$ and $q\neq 1/(2(1-\gamma/3))$ because in
this case the rows of ${\boldsymbol\Lambda}$ are linearly
independent. This is due to the fact that under the two conditions
above the rows of ${\boldsymbol\Pi}^m$ are linearly independent, since
its eigenvalues are all the possible products of three eigenvalues of
$\Pi^m$~\cite{magnus99:matrix} and the conditions above
guarantee that these are nonzero. A sufficient condition for
the invertibility of ${\boldsymbol\Lambda} {\boldsymbol\Lambda}^T$ is
$q<1/2$, which spans most cases of interest.

Since we have linearised the optimisation problem then
$\widetilde{\boldsymbol\pi}$ may contain negative values, but in
practice these are relatively small. Setting these values to zero and
normalising $\widetilde{\boldsymbol\pi}$ we obtain an approximation to
the optimum distribution $p(\mathbf{u}|\xi')$. An example of this
approximation compared to the results of the Blahut-Arimoto algorithm
is shown in Figure~\ref{fig:approx}.
\begin{figure}[t]
  \centering
  \psfrag{a}[l][l]{\scriptsize Blahut-Arimoto}
  \psfrag{b}[l][l]{\scriptsize Linearised approximation}
  \psfrag{CTA}[r][r]{\tiny CTA}
  \psfrag{CTC}[r][r]{\tiny CTC}
  \psfrag{CTT}[r][r]{\tiny CTT}
  \psfrag{CTG}[r][r]{\tiny CTG}
  \psfrag{TTA}[r][r]{\tiny TTA}
  \psfrag{TTG}[r][r]{\tiny TTG}
  \psfrag{puxp}[b][]{$p(\mathbf{u}|\xi')$}
  \includegraphics[width=8cm]{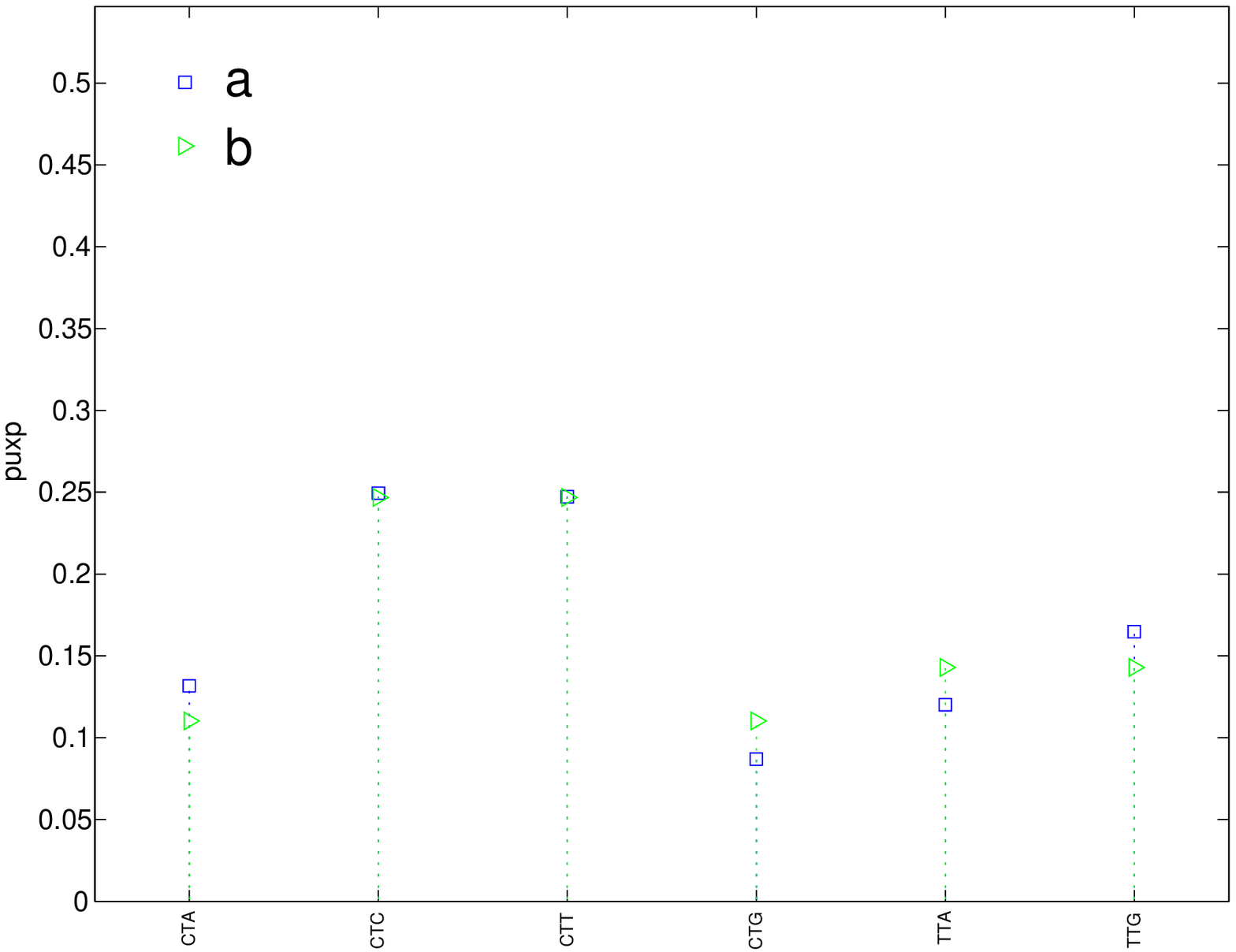}
  \caption{Comparison of maximising $p(\mathbf{u}|\xi')$ distributions
    for the deterministic case $\xi'=\mathrm{Leu}$ ($q=10^{-2}$,
    $m=100$, $\gamma=0.1$)}
  \label{fig:approx}
\end{figure}

\bibliographystyle{IEEEtran}   
\bibliography{it}

\end{document}